\documentclass[11pt]{article}
\usepackage[margin=1in]{geometry}
\usepackage{times}
\usepackage{amsfonts}
\usepackage{amsmath,amsthm,amssymb}
\usepackage{latexsym}
\usepackage{color}
\usepackage{epic}
\usepackage{epsfig}
\usepackage[hang,raggedright]{subfigure}
\usepackage{wrapfig}
\usepackage[utf8]{inputenc} % allow utf-8 input
\usepackage[T1]{fontenc}    % use 8-bit T1 fonts
\usepackage{url}            % simple URL typesetting
\usepackage{booktabs}       % professional-quality tables
\usepackage{amsfonts}       % blackboard math symbols
\usepackage{nicefrac}       % compact symbols for 1/2, etc.
\usepackage{microtype}      % microtypography

\usepackage[justification=centering]{caption}
\allowdisplaybreaks[1]
\usepackage{algorithm}
\usepackage{algorithmic}
\usepackage{setspace}
\usepackage{float}

\def\ddp{\qopname\relax o{DP}}
\makeatletter
\newcommand*{\rom}[1]{\expandafter\@slowromancap\romannumeral #1@}
\makeatother

\usepackage{array}

\newcolumntype{L}[1]{>{\raggedright\let\newline\\\arraybackslash\hspace{0pt}}m{#1}}
\newcolumntype{C}[1]{>{\centering\let\newline\\\arraybackslash\hspace{0pt}}m{#1}}
\newcolumntype{R}[1]{>{\raggedleft\let\newline\\\arraybackslash\hspace{0pt}}m{#1}}

\def\en{\texttt{MaxEn}}
\def\ca{\texttt{CARD}}
\def\ge{\texttt{ColG}}
\def\ri{\texttt{RigorOPT}}

\newtheorem{theorem}{Theorem}[section]

\newtheorem{lemma}[theorem]{Lemma}
\newtheorem{proposition}[theorem]{Proposition}

\newtheorem{definition}[theorem]{Definition}

\newtheorem*{conjecture*}{Conjecture}
\newtheoremstyle{nonindented}{1ex}{1ex}{}{}{\bfseries}{.}{.5em}{}
\newtheoremstyle{indented}{1ex}{1ex}{\itshape\addtolength{\leftskip}{0.6cm}\addtolength{\rightskip}{0.6cm}}{}{\bfseries}{.}{.5em}{}
\theoremstyle{nonindented}
\theoremstyle{indented}
\theoremstyle{plain}

\renewcommand{\tilde}{\widetilde}

\DeclareMathOperator{\poly}{poly}

%Operators: These operators are such that a subscript appears below
%in \[ \] math mode, and to the bottom right in regular $ $ math mode

%regular version

\def\min{\qopname\relax n{min}}
\def\max{\qopname\relax n{max}}

%bold version
\def\Pr{\qopname\relax n{\mathbf{Pr}}}
\def\Ex{\qopname\relax n{\mathbf{E}}}

\newcommand{\RR}{\mathbb{R}}

\def\M{\mathcal{M}}
\def\P{\mathcal{P}}

\def\S{\mathcal{S}}

\def\X{\mathcal{X}}

%Plain eps or pdf figure. Use IPE to embed tex in it.

%Combined PS/Latex figure. This is option of choice for including tex
%code from xfig. Remember to export from xfig using "combined ps/latex" option

%Combined PDF/Latex figure. This is option of choice for including tex
%code from xfig. Remember to export from xfig using "combined pdf/latex" option

%Algorithmic Environment stuff

%LP environment stuff

\newcommand{\maxi}[1]{\mbox{maximize} & {#1 } & \\}

\newcommand{\st}{\mbox{subject to} }

\newcommand{\con}[1]{&#1 & \\}
\newcommand{\qcon}[2]{&#1, & \mbox{for } #2.  \\}
\newenvironment{lp}{\begin{equation}  \begin{array}{lll}}{\end{array}\end{equation}}
\newenvironment{lp*}{\begin{equation*}  \begin{array}{lll}}{\end{array}\end{equation*}}

%% MISC

\begin{document}
	
	\title{Mitigating the Curse of Correlation in Security Games by  Entropy Maximization}
	\author{
		\begin{tabular}{ccccccc}
			\centering
			Haifeng Xu &\quad &  Shaddin Dughmi  &\quad&  Milind Tambe &\qquad &   Venil Loyd Noronha \tabularnewline
		\end{tabular} \vspace{2mm} 	\\
		 University of Southern California, Los Angeles, USA \vspace{1mm} \\
		\{haifengx,shaddin,tambe,vnoronha\}@usc.edu
	}
	
%	\date{}

	\maketitle

%	\keywords{Security Games, Correlation, Inforamtion Leakage, Max Entropy} 

\begin{abstract}
In Stackelberg security games, a defender seeks to randomly allocate  limited security resources to protect critical targets from an attack.  In this paper, we study a fundamental, yet underexplored, phenomenon in security games, which we term the \emph{Curse of Correlation} (CoC). Specifically, we observe that there are inevitable correlations among the protection status of different targets.  Such correlation is a crucial concern, especially in \emph{spatio-temporal} domains like conservation area patrolling, where attackers can surveil patrollers at certain areas and then infer their patrolling routes using such correlations. To mitigate this issue, we propose to design entropy-maximizing defending strategies for spatio-temporal security games, which frequently suffer from CoC. 
%introduce the \emph{principle of max-entropy} to security games and focus on designing entropy-maximizing defending strategies for spatio-temporal security games, which frequently suffer from CoC. 
We prove that the  problem is \#P-hard in general.  However, it admits efficient algorithms in well-motivated special settings. Our experiments show significant advantages of max-entropy algorithms over previous algorithms. 
A scalable implementation of our algorithm is currently under pre-deployment testing for integration into FAMS software to improve the scheduling of US federal air marshals.

%%%%% for paper submission  %%%%%%%%%%
% In the security game, a defender seeks to randomly allocate  limited security resources to protect critical targets from attack.  In this paper, we identify and study a fundamental, yet underexplored, phenomenon in security games, which we term the Curse of Correlation (CoC). Specifically, we observe that there are inevitable correlations among the protection status of different targets.  Such correlation is a crucial concern, especially in spatio-temporal domains like conservation area patrolling, where attackers can surveil patrollers at certain areas and then infer their patrolling routes using such correlations. To mitigate this issue, we propose to design entropy-maximizing defending strategies for spatio-temporal security games, which frequently suffer from CoC. We prove that the problem is #P-hard in general, however admits efficient algorithms in well-motivated special settings. Our experiments show significant advantages of max-entropy algorithms over previous algorithms. A scalable implementation of our algorithm has been integrated into the Federal Air Marshal Service (FAMS) software to improve the scheduling of US federal air marshals.
\end{abstract}

%\vspace{-3mm}
\section{Introduction}  
The security game is played between a defender and an attacker where the defender’s goal is to \emph{randomly} allocate a limited number of security resources to protect targets from attack
\cite{Tambe2011,Blum2014,Yin2015}. Standard models assume that the attacker only knows the defender's randomized strategy, but is oblivious to the sampled real-time allocation. However, this assumption may fail since in many situations the attacker can  \emph{partially} observe the defender's real-time allocation. Such partial observation can be utilized to infer extra information about the overall strategy. This is particularly the case in games played out in space and time, a.k.a. \emph{spatio-temporal} security games, which are also the primary focus of this work. For example, it has been reported that in wildlife protection domains, some poachers partially monitor rangers' patrolling activities  and then make their poaching plans based on their observations \cite{Nyirenda2012,Moreto2013}. Similar issues could happen when optimizing the patrolling on graphs \cite{noa08a,Gatti2} and randomized scheduling of air marshals \cite{News1}.

We observe that the randomized allocation of limited security resources creates \emph{inherent} correlations within protection statuses of different targets -- the coverage of some targets implies that other targets are not protected. Such correlation allows the attacker to utilize his partial observation at some targets to infer information about other targets' protection status, a phenomenon which we term the \emph{Curse of Correlation} (CoC).  We prove that such correlation is inevitable, and may cause significant loss if not addressed properly. The ideal approach for mitigating CoC is to come up with an accurate model to capture the attacker's partial observations -- which is also called an  \emph{information leakage} model  \cite{Xu15} --  and then compute the optimal defender strategy tailored specifically to this leakage model. For example, one possibility is to model the interaction as an extensive-form game and then solve the game. However, we argue that this approach suffers from several critical drawbacks: 1. lack of  accurate leakage models in practice; 2. limited scalability;  3. vulnerability to the attacker's strategic manipulations. These barriers motivate our adoption of a more robust (though inevitably more conservative) approach.  Particularly, we propose to adopt the ``most random'' defending strategy, or more formally, the strategy that, subject to optimizing the usual objective under required constraints, maximizes (Shannon) \emph{entropy}. Intuitively, such a strategy could be resistant to partial leakage due to its extreme randomness/unpredictability. % This is a natural choice because a patrolling strategy with high entropy exhibits approximate stochastic independence among targets, therefore is more resistant to information leakage caused by the attacker's real-time surveillance. 

To that end, this paper offers the following contributions. First, we formally study the Curse of Correlation (CoC) phenomenon in security games and illustrate the importance of handling CoC, particularly in spatio-temporal domains. Second, we propose to adopt the defending strategy with maximum entropy and illustrates its advantages when compared to the idealized optimal solution tailored to a specific leakage model. Experiments show that our approach is close to the idealized optimality but are more robust to leakage model misspecification and attacker's strategic manipulations.  Third, we design novel algorithms to sample defending strategies with maximum entropy in spatio-temporal security games. We prove that the exact max-entropy defending strategy is  \#P-hard to compute in general, but admits polynomial-time algorithms in well-motivated special settings as well as efficient heuristic algorithms. Experiments show  that our algorithms are far more scalable than the idealized optimal approach. \emph{Notably, a scalable implementation of our algorithm is currently under pre-deployment testing for integration into the Federal Air Marshal Service (FAMS) software to improve the scheduling of US federal air marshals.}

% We consider two of the most important spatio-temporal security games: air-marshal scheduling and wildlife patrolling.
\vspace{2mm}
\noindent {\bf Related Work.} Most relevant to ours is the recent work of \cite{Xu15}, which focuses on a  basic security setting with \emph{no} spatio-temporal structures, \emph{no} scheduling constraints and a \emph{specific}  information leakage model, i.e., only one target leaks information. Despite such simplifications, they prove that computing the optimal defender strategy is NP-hard, and proposes an \emph{exponential-time} algorithm. Our work differs from \cite{Xu15} in two key aspects. First, \cite{Xu15} raises the issue of information leakage in security games whereas we argue that the reason that  information leakage hurts the defender is the underlying correlation among targets and, more importantly, formally prove that such correlation is \emph{inevitable}. This is also why we term the phenomenon the curse of correlation.  Second, the security settings we consider in this paper are more complicated with spatio-temporal structures, scheduling constraints\footnote{Such scheduling constraints are crucial for important applications in many domains such as FAMS, wildlife conservation, coast guard patrolling, etc.} and more leaking targets, thus is even more intractable. We are not aware of any efficient way to extend and adapt the (already exponential-time) algorithm in \cite{Xu15} to our setting.

Alon et al. \cite{Alon2013} study information leakage in \emph{normal-form} zero-sum games and illustrate the difficulty of handling leakage by exhibiting NP-hardness results in several model variants. However, the specific leakage models in \cite{Alon2013}  do not directly fit into the security game applications we consider. In principle, the leakage problem can also be represented as an extensive-form game (EFG) and there has been significant progress in solving general-purpose large EFGs recently \cite{EFG0,EFG1,EFG2,EFG3}. However, we did not take this approach because the size of information sets in our game increases exponentially in the number of security resources, time steps and possibly leaking targets. This very quickly makes our problem intractable (see more discussions in later sections).   
This work also relates to the line of research on adversarial patrolling games (APGs) \cite{noa08a,Gatti2,Bosansky,Alpern2011,Vorobeychik2012}. Our discussion here cannot cover the rich body of work on APGs; We refer curious readers to \cite{Gatti2} for a good overview. APGs also consider the attacker's real-time observations, however our settings differ from APGs in several aspects: 1. the defender in APGs \emph{typically} has only one patroller and the attacker has \emph{full knowledge} of the defender's movements, while in our setting the defender has many security resources and the attacker can only observe  a small subset of targets; 2. APGs assume that the attacker takes time to complete an attack, while attacks in our setting are instantaneous. These important differences make the algorithms for APGs inapplicable to our settings.

\section{Preliminaries}\label{sec:prelim}
%\vspace{-2mm}
The security game is  played between a defender and an attacker. The defender aims to use limited security resources to protect $n$ targets, denoted by set $[n]$, from the attacker's attack. Any \emph{defender pure strategy} $S\subseteq [n]$ is a subset of targets that can be feasibly covered by the security resources. A more convenient representation of $S$ is a binary vector $\mathbf{s} = (s_1,...,s_n)^T \in \{0,1 \}^n$, indicating whether each target is covered or not. We will use $S$ or $\mathbf{s}$ interchangeably throughout, and the meaning should be clear from the context.  Let $\S$ denote the set of all defender pure strategies, whose size is typically \emph{exponential} in the game representation \cite{Tambe2011}.  For example, $\S$ could consist of all subsets of $[n]$ of a fixed size, which corresponds to the setting of no resource allocation constraints. However, in reality, $\S$ usually has more structures due to resource allocation constraints.
%$\S$ could be equivalently thought of as a \emph{set system}. 
A  defender mixed strategy is a distribution $\mathbf{p}$ over $\S$.  The attacker pure strategy is a target $i \in [n]$ which he chooses to attack. The concrete payoff structure of the game is not particularly important  for the purpose of this paper. We only mention that, given that target $i$ is attacked, the defender and attacker both get a utility depending on its importance  and marginal protection probability, described below.      

%We do not assume a particular attacker behavior model, and allow the attacker to be rational or irrational, attack one or multiple targets. For the purpose of this paper, the concrete utility structure of the game is not particularly important \emph{except} that we require players' \emph{expected utilities}  to depend only on the \emph{marginal probability} that each target is covered.  This is assumed by most security games, even with rationality or uncertainty concerns \cite{Nguyen2014}.

%payoff structure of a security game is captured by $4n$ numbers $U^{d/a}_{c/u}(i)$, for $i \in [n]$, indicating the defender's ($d$) or attacker's ($a$) utility when target $i$ is covered ($c$) or uncovered ($u$), given that target $i$ is attacked. Therefore, given defender pure strategy $\mathbf{s}$ and attacker strategy $i$, the defender's utility is $e_i  U^d_c + (1-e_i) U^d_u$ and the attacker's utility is $e_i  U^a_c + (1-e_i) U^a_u$. 
%By taking expectation over $\mathbf{s}$ and $i$, we obtain the expected utility when both players play mixed strategies. 

\vspace{1mm}
\noindent {\bf Marginal Probabilities and Implementability}. For any defender mixed strategy $\mathbf{p}$, $\sum_{\mathbf{s} \in \S} p_{\mathbf{s}} \cdot s_i =x_i  \in [0,1]$ is the marginal (coverage) probability of target $i$. Let  $\mathbf{x}=(x_1,$ $...x_n)^T = \sum_{\mathbf{s} \in \S} p_{\mathbf{s}} \cdot  \mathbf{s}  \in [0,1]^n$ be the \emph{marginal vector} of all targets. In this case, we say that mixed strategy $\mathbf{p}$ \emph{implements} marginal vector $\mathbf{x}$, and $\mathbf{x}$ is \emph{implementable} (by $\mathbf{p}$). Not every $\mathbf{x} \in [0,1]^n$ is implementable -- all the implementable $\mathbf{x}$'s form a polytope $\mathbf{conv}(\S)$, i.e., the convex hull of $\S$. 

There are generally many mixed strategies that implement the same marginal vector.  Due to the concern of computational efficiency, most algorithms for security games are designed to compute a mixed strategy of small support, which is evidently a poor choice in the presence of attacker surveillance.  A primary goal of this paper is to design mixed strategies that are robust to attacker partial observations. Next, we describe a particular  implementation for a marginal vector. 

%There are generally many mixed strategies that implement the same marginal vector. Due to the concern of computational efficiency, most algorithms for security games are designed to compute  a mixed strategy of \emph{small support}. One primary goal of this paper is to understand which of these mixed strategies should be adopted. In fact, we assume throughout that  the optimal marginal vector is given to us, thus we can restrict our attention to computing a proper mixed strategy that implements it.  Next, we discuss a particular way to implement any given marginal vector $\mathbf{x}$. 

% Note however,  not every $\mathbf{x} \in [0,1]^n$ corresponds to a marginal probability of some mixed strategy. For example, if $n = 3$ and $\S = \{ (1,1,0)^T, (0,0,1)^T \}$, it can be easily verified that $\mathbf{x} = (0.4,0.6,0.5)^T$ does not correspond to the marginal probability of any mixed strategy, while $\mathbf{x} = (0.5,0.5,0.5)^T$ is the marginal probability of the uniform mixture of strategy $ (1,1,0)^T $ and $ (0,0,1)^T$. We therefore say  $\mathbf{x}$ is \emph{implementable} if it  corresponds to some defender mixed strategy $\mathbf{p}$; and conversely, $\mathbf{p}$ \emph{implements} $\mathbf{x}$.  Note that an implementable $\mathbf{x}$ can usually be implemented by many defender mixed strategies (see, e.g., the example in Section \ref{sec:curse}).

\vspace{1mm}
\noindent {\bf The Max-Entropy Implementation}.  
Given any implementable $\mathbf{x}$, the \emph{max-entropy implementation} of $\mathbf{x}$ is the mixed strategy $\mathbf{p}$ of the maximum (Shannon) entropy among all mixed strategies that implement $\mathbf{x}$. More precisely, the max-entropy implementation of  $\mathbf{x}$ is the optimal solution to the following convex program (CP) with variables $p_{S}$.  

\begin{lp} \label{lp:maxentropy}
	\maxi{ -\sum_{S \in \S }p_{S} \log( p_{S} ) } 
	\st
	\qcon{ \sum_{S \in \S: i \in S} p_{S} = x_i}{i \in [n]}
	\con{ \sum_{S \in \S} p_{S} =1 }   
	\qcon{p_{S} \geq 0}{ S \in \S}   
\end{lp} 
An obvious challenge of solving CP \eqref{lp:maxentropy}  is that the optimal $\mathbf{p}^*$ typically has exponentially large support, thus cannot even be  written down explicitly in polynomial time.  It turns out that this can be overcome  via algorithms that efficiently samples  a set $S$ ``on the fly''. Therefore, we say an algorithm solves CP \eqref{lp:maxentropy} if it takes $\mathbf{x}$ as input and randomly samples  $S \in \S$  with probability $p_{S }^*$ where  $\mathbf{p}^*$ is the  optimal solution to CP  \eqref{lp:maxentropy}. It turns out that we can sample the max-entropy implementation in polynomial time as long as we can solve the  following \emph{generalized counting} problem over $\S$. 

\begin{definition}[Generalized Counting]
	Given any $\mathbf{\alpha} \in R_+^n$, compute $C(\alpha) = \sum_{S \in \S} \alpha_S$, where $\alpha_S = \prod_{i\in S} \alpha_i$. 
	\end{definition}
%\noindent {\bf Generalized Counting}: Given any $\mathbf{\alpha} \in R_+^n$, compute $C(\alpha) = \sum_{S \in \S} \alpha_S$, where $\alpha_S = \prod_{i\in S} \alpha_i$. 
%\vspace{1mm}

%\begin{definition}[Generalized Counting]
%	Given any $\mathbf{\alpha} \in R_+^n$, compute $C(\alpha) = \sum_{S \in \S} \alpha_S$, where $\alpha_S = \prod_{i\in S} \alpha_i$. 
%\end{definition}
Note that $C(\mathbf{1})$ equals precisely the cardinality of $\S$; More generally, $C(\mathbf{\alpha})$ is a weighted count of the elements in $\S$ with weights $\alpha_S = \prod_{i\in S} \alpha_i$. 
%Since counting always means generalized counting throughout, we sometimes drop ``generalized" and instead call these \emph{counting problems} for convenience. By convention, we also call an algorithm for the counting problem a \emph{counting oracle}. 

\begin{lemma}\label{lem:EntroSample}(Adapted from \cite{singh14})
	The max-entropy implementation of any $\mathbf{x} \in \mathbf{conv}(\S)$ can be sampled in polynomial time
	%\footnote{By  ``computed'' we mean the problem is solved within machine precision, since computers cannot solve convex programs exactly. } 
	%for any implementable $\mathbf{x} \in \mathbf{conv}(\S)$
	if and only if the  generalized counting problem over $\S$ can be solved in polynomial time.
\end{lemma}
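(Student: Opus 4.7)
The plan is to exploit a key structural property of the optimizer of CP~\eqref{lp:maxentropy}: by KKT conditions, the optimal solution has a product form $p_S^* = \alpha_S / C(\alpha)$ for some $\alpha \in \RR_+^n$. Writing the Lagrangian with multipliers $\lambda_i$ for each marginal constraint and $\mu$ for normalization, stationarity gives $-\log p_S - 1 - \sum_{i \in S}\lambda_i - \mu = 0$, so $p_S^* \propto \prod_{i \in S} e^{-\lambda_i}$. Setting $\alpha_i = e^{-\lambda_i}$ and applying the normalization condition yields $p_S^* = \alpha_S / C(\alpha)$, where $C(\alpha)$ is exactly the generalized count. Sampling the max-entropy implementation of $\mathbf{x}$ therefore reduces to (i) finding the right $\alpha$, and (ii) drawing $S$ from the distribution proportional to $\alpha_S$.

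For the \emph{counting-implies-sampling} direction, I would first solve the convex dual
\[
\min_{\lambda \in \RR^n}\ \log C(e^{-\lambda}) + \sum_{i=1}^n \lambda_i x_i,
\]
whose optimizer encodes the desired $\alpha$. Both the objective and its gradient---whose $i$-th component equals $x_i$ minus the marginal coverage of target $i$ under $p_S \propto \alpha_S$, itself a ratio of two generalized counts---can be evaluated with a counting oracle, so the ellipsoid method (or any first-order scheme) finds $\alpha$ in polynomial time. Given $\alpha$, I would then sample $S$ by self-reducibility: process targets one at a time and include target $i$ with conditional probability equal to a ratio of generalized counts over the sub-families of $\S$ obtained by pinning earlier targets to be in or out of $S$ (equivalently, by zeroing out the corresponding $\alpha_j$ or conditioning on $j \in S$).

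For the converse \emph{sampling-implies-counting} direction, I would follow the classical Jerrum--Valiant--Vazirani self-reduction. By the product-form characterization, for any $\alpha \in \RR_+^n$ the distribution $p_S \propto \alpha_S$ is the max-entropy implementation of its own marginal vector, and this vector can be located by gradient descent on the dual so that the max-entropy sampler produces samples from it. I would then telescope $C(\alpha)$ along a chain $\alpha = \alpha^{(0)}, \alpha^{(1)}, \ldots, \alpha^{(n)}$ in which coordinates are zeroed out one by one, estimating each ratio $C(\alpha^{(k)})/C(\alpha^{(k+1)})$ by the empirical probability that the zeroed-out coordinate is absent; the final term is a trivially computable base value, and multiplying the ratios recovers $C(\alpha)$.

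The main obstacle I anticipate is not the structural reduction itself but the bit-complexity subtleties: one must show the optimal dual variables $\lambda^*$ are polynomially bounded, and handle the case when $\mathbf{x}$ lies on the boundary of $\mathbf{conv}(\S)$, where some $\lambda_i^*$ diverge and the max-entropy implementation has a reduced support. These issues are standard and are dealt with in \cite{singh14} through a controlled perturbation of $\mathbf{x}$ into the relative interior together with rounding arguments; my plan is to invoke that machinery directly rather than reprove it.
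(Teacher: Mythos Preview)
The paper does not supply its own proof of this lemma; it simply cites \cite{singh14} and uses the statement as a black box. Your proposal correctly reconstructs the Singh--Vishnoi argument: the KKT/product-form characterization, dual optimization with a counting oracle to recover $\alpha$, self-reducible sampling from $p_S \propto \alpha_S$, and a JVV-style telescoping for the converse. This is exactly the approach of the cited reference, so there is nothing to contrast.

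One remark on the self-reducibility step: your sampling procedure requires generalized counts over sub-families obtained by pinning elements in or out of $S$. Pinning an element \emph{out} is handled by zeroing its weight, but pinning an element \emph{in} is not, in full generality, expressible as a single call to the counting oracle for $\S$; na\"ive inclusion--exclusion over the already-included elements blows up exponentially. In \cite{singh14} and in this paper's applications this is a non-issue because the counting algorithms are dynamic programs that natively support both kinds of conditioning (equivalently, the families are closed under contraction as well as deletion). Since you already plan to invoke the machinery of \cite{singh14} directly, this is covered, but it is worth being explicit that ``counting oracle for $\S$'' here tacitly means an oracle that also applies to the pinned sub-families---which is exactly what the DP-based counters in Sections~\ref{sec:const} and~\ref{sec:AirMarshal} provide.
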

% Given a counting oracle, there  exists practically very efficient implementation for max-entropy sampling \cite{Xu15}. 
% As a result, our problem boils down to the generalized counting problem over the combinatorial structure of the air marshal scheduling problem. 
\vspace{-1mm}
%We mention the high level idea of Lemma \ref{lem:EntroSample} while refer the reader to \cite{singh14} for formal reductions. Particularly, the Lagrangian dual of CP \eqref{lp:maxentropy} is a  convex program and the evaluation of its objective value involves precisely a counting problem. This is how counting relates to entropy maximization. Moreover, there is also known framework to use a counting algorithm to sample from the max-entropy distribution \cite{Jerrum1986}.  
As a result of Lemma \ref{lem:EntroSample}, to sample the max-entropy implementation of any marginal vector $\mathbf{x}$,  we only need to design a generalized counting algorithm for $\S$.  However, we do remark that  this does \emph{not} magically simplify the problem  since the generalized counting problem is computationally intractable for most combinatorial problems. In fact, \emph{very few} efficient algorithms are known, and the problem is \#P-hard even for counting simple combinatorial structures like perfect bipartite matchings \cite{Valiant1979}.%, which is \#P-hard \cite{Valiant1979} and for which the fastest approximate algorithm runs in $\O(n^7 \log^4n)$ \cite{bezakova2006}.    

\section{The Curse of Correlation and A Remedy by Entropy Maximization}
\label{sec:curse}
%\vspace{-2mm}
%The essential charge of security games, regardless of which model or algorithm adopted,  is to create \emph{unpredictability via randomization}. 
In this section we illustrate how the attacker can first observe the protection status of a small set of targets and then utilize correlations to infer  the protection of other targets. To do so, the attacker does \emph{not} even need full knowledge of the defender's mixed strategy. Instead he only needs to know the correlations among the observed targets and other ones. 
%Most of the previous research assumes that the attacker cannot observe \emph{any} of the defender's real-time resource allocation.  However, this assumption may fail in practice since in many situations  the attacker can \emph{partially} observe the deployment \cite{Moreto2013,Gatti2,Xu15}. By observing the protection status of a small set of targets and then utilizing correlations, the attacker can infer  information about the protection of other targets.  We illustrate this via a concrete example. 

%We now illustrate, via a concrete example,  that how this assumption may fail in real-world domains and how its failure may cause a significant loss to the defender.     

% \subsection{Hacking Security Games with One Bit of Information}\label{sec:Hack}

\subsection{Hacking Security Games with One Bit of Information.}
\label{subsec:curse}
%\vspace{-2mm}
Our example is described  in the setting of conservation area protection, though we emphasize that the phenomenon it illustrates could occur in any security setting. The problem concerns designing rangers' patrolling schedules within a fixed time period, say, a day. This is usually modeled by discretizing the area into cells as well as discretizing the time. At the top of Figure \ref{fig:example}, we depict a concrete example with  $4$ cells to be protected at 3 time layers: morning, noon and afternoon. The numbers around each cell is the desired marginal coverage probabilities for each cell at each time (color thickness depicts the probability density).  
%We will use $v_{t,i}$ to denote a node in the grid, representing cell $i$ at time $t$. 
%Recall that a feasible patrol route for the ranger would correspond to a path in the grid, e.g., $v_{1,1} \to v_{2,2} \to v_{3,2}$. 
The defender has $2$ rangers, and seeks to randomize their patrolling  to achieve the required marginal probabilities.    

\begin{figure}[H]
	\centering
	%	\begin{flushright}

	\includegraphics[bb=40bp 100bp 740bp 490bp,clip,scale=0.31]{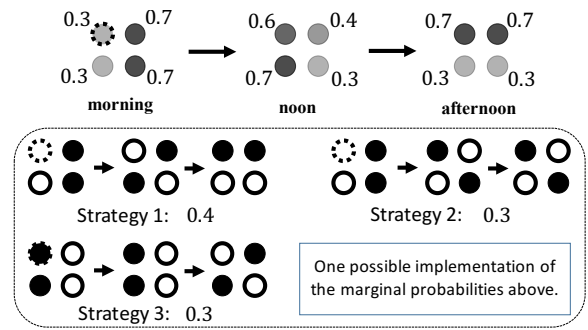}%
	\caption{\label{fig:example} \small Marginal Probabilities and Their Implementation.}
	%	\end{flushright}

\end{figure}

To deploy a mixed strategies of small support, as done by traditional algorithms,  one can implement the marginal vector by mixing the three  pure strategies listed    at the bottom of  Figure \ref{fig:example}  (filled dots are
fully covered). Unfortunately, it turns out
that such an implementation is extremely vulnerable to the attacker's partial surveillance. For example, if the attacker can surveil the status of the top-left cell in the morning (i.e., the one with dashed boundary)  and prepare an  attack in the afternoon, he can always find a completely uncovered cell to attack. 
Specifically, if  the dashed cell is covered, this means \texttt{Strategy} $3$ is deployed and two particular cells are uncovered in the afternoon; Otherwise, either \texttt{Strategy} $1$ or $2$ is deployed, thus the bottom left cell is uncovered  in the afternoon for sure. So the attacker successfully identifies uncovered cells in the afternoon  by monitoring only \emph{one} cell  in the morning.  

The issue above is due to the inherent  correlation among the protection status of different targets when allocating a limited number of resources. Particularly, the coverage of  some targets must imply that some other targets are unprotected. The example illustrates how the attacker can take advantage of such correlation and infer a  significant amount of  information about the protection of other targets by monitoring even a single target. This is what we term the \emph{Curse of Correlation} (CoC) in security games. The following Proposition \ref{thm:EntroLoss} illustrates this phenomenon in a more formal sense.
%Concerns like these motivate our formal treatment about effects of the attacker's partial observation  and how to deal with it.  Unfortunately, we prove that  such partial observation of certain targets indeed \emph{inevitably degrades the unpredictability} of  other unobserved targets (in security games, the more unpredictable, the better).

We begin with a few notations. Recall that any mixed strategy $\mathbf{p}$ is a distribution over $\S$. Equivalently, we can view a mixed strategy as a random binary vector $\mathbf{X} =(X_1,...,X_n) \in \{0,1\}^n$ satisfying  $\Pr(X = \mathbf{s}) = p_{\mathbf{s}}$. Here, $X_i \in \{0,1\}$ denotes the random protection status of target $i$, and $\Pr(X_i = 1) = x_i$ is the marginal coverage probability.  For any $X_i$, let $H(X_i) = x_i \log x_i + (1-x_i) \log(1-x_i)$ denote its Shannon entropy.  Note that, $X_i$'s are correlated.  Let $X_i|X_k$ denote $X_i$ conditioned on $X_k$ and $H(X_i|X_k)$ denote its conditional entropy.  We say target $k$ is \emph{trivial} if $ \Pr(X_k \mbox{\,=\,}1) = 0$ or $1$; Otherwise, $k$ is \emph{non-trivial}. Obviously, the attacker infers no information about other targets by monitoring a trivial target.  The following proposition shows that  if any non-trivial target $k$ is monitored,  the attacker can always infer information about the protection of other targets.\footnote{Here we assume that all resources are fully used, and do not consider the  (unreasonable) situations in which certain security resources are sometimes underused or idle.}   %Moreover, the more targets that the attacker observes, the higher loss to the unpredictability.  

\begin{proposition}\label{thm:EntroLoss}
	For any non-trivial target $k$, we have $$
	\mathbf{E}_{X_k} \big[ \sum_{i  \not = k}  H(X_i|X_k) \big] < \sum_{i \not = k} H(X_i)$$.
\end{proposition}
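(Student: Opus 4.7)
The plan is to reduce the claim to two standard ingredients: the basic information-theoretic inequality $H(X_i|X_k) \leq H(X_i)$ with equality iff $X_i$ is independent of $X_k$, and a counting argument that forces at least one of those inequalities to be strict. By definition of conditional entropy, $\mathbf{E}_{X_k}[H(X_i|X_k)]$ is the standard conditional entropy $H(X_i|X_k)$, so after swapping the sum and the expectation via linearity the left-hand side of the claim is $\sum_{i \neq k} H(X_i | X_k)$. Applying the entropy inequality termwise already yields the non-strict version, and the whole task collapses to exhibiting a single index $i \neq k$ for which $X_i$ and $X_k$ are not independent.

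For this I would argue by contradiction. The footnote's assumption that all security resources are fully used makes the total coverage $\sum_{j=1}^n X_j$ a deterministic constant $m$ (the number of resources), hence $\sum_{i \neq k} X_i = m - X_k$ almost surely. If every $X_i$ with $i \neq k$ were independent of $X_k$, then so would be their sum, and in particular
\begin{equation*}
\mathbf{E}\bigl[\textstyle\sum_{i \neq k} X_i \,\big|\, X_k = 1\bigr] \;=\; \mathbf{E}\bigl[\textstyle\sum_{i \neq k} X_i \,\big|\, X_k = 0\bigr].
\end{equation*}
But the identity $\sum_{i \neq k} X_i = m - X_k$ forces the left-hand side to be $m-1$ and the right-hand side to be $m$, a contradiction. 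Non-triviality of $k$ is used here to guarantee $0 < \Pr(X_k{=}1) < 1$ so that both conditionings are well defined and the two conditional expectations are comparable.

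Combining the two observations, at least one term in $\sum_{i \neq k} H(X_i|X_k)$ is \emph{strictly} smaller than the corresponding $H(X_i)$, giving the claimed strict inequality. The only subtlety worth flagging is where the ``fully used'' assumption actually enters: without it, one could augment the strategy by an independent coin flip that occasionally leaves a resource idle and otherwise covers $k$, making $X_k$ carry no information about any other $X_j$ and rendering the inequality tight. The footnote is precisely what rules out this pathology and lets the counting step go through; the information-theoretic part of the argument is routine once strictness is reduced to non-independence of at least one pair.
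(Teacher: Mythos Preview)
Your proof is correct and follows essentially the same two-step structure as the paper's: (i) use concavity/Jensen to obtain $H(X_i\mid X_k)\le H(X_i)$ termwise, and (ii) invoke the fully-used-resources assumption to rule out simultaneous independence of $X_k$ from every other $X_i$. Your strictness step via the deterministic identity $\sum_{i\ne k}X_i = m - X_k$ is exactly the mechanism behind the paper's footnote, only stated more explicitly than the paper's own (somewhat terse) argument.
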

\begin{proof}
	Let $\Pr(X_i = 1|X_k = 1) = x_i^1$ and $\Pr(X_i = 1|X_k = 0) = x_i^0$. Since $\Ex_{X_k} [\Pr(X_i = 1|X_k)] = \Pr(X_i = 1) = x_i$, we have $x_i =  x_i^1 \cdot \Pr(X_k = 1) + x_i^0 \cdot \Pr(X_k = 0) $. Since $H(X_i)$ is strictly concave w.r.t. $x_i$, we have 
	\begin{equation}\label{eqn:posteriorEqu}
	\Ex_{X_k} H(X_i|X_k) \geq H(X_i).
	\end{equation}
	Summing over all $i \not = k$, we get $ \sum_{i\in [n]: i \not = k} \Ex_{X_k} H(X_i|X_k) \geq \sum_{i\in [n]: i \not = k}H(X_i)$. We now argue that the ``=" \emph{cannot} hold. Prove by contradiction. Note that target $i$ is non-trivial, therefore $\Pr(X_k) \not = 0,1$.  Since $H(X_i)$'s are strictly concave, if the ``=" holds, then we must  have $\Pr(X_i = 1|X_k = 1) = \Pr(X_i = 1|X_k = 0) = x_i$ for any $i \not = k$. However, this implies that target $i$ is trivial, i.e., $i$ is either fully protected or unprotected. Otherwise, there must exist some $j \not = i$ such that the marginal probability of $j$ will be different between the circumstances that $i$ is protected and not protected. This concludes the proof. 
\end{proof}
Proposition \ref{thm:EntroLoss} shows that, conditioned on $X_k$, the entropy sum of all other $X_i$'s strictly decreases in expectation. Therefore,  the correlations among targets are intrinsic and inevitable. %The proof of Proposition \ref{thm:EntroLoss} utilizes convexity of entropy functions, and is deferred to the appendix.\footnote{For a few results which appear intuitive, we only give proof sketches in the main body while defer the formal proofs to the appendix.}

\vspace{1mm}

\noindent {\bf Dilemma of Traditional Algorithms}. Traditional security game algorithms, e.g., comb sampling \cite{Tsai2010} and strategy/column generation \cite{Jain2010,Bosansky2015}, are \emph{designed} to generate mixed strategies of  small support for the sake of computational efficiency.  Unfortunately, such small-support mixed strategies typically induce high correlation among the protection status of targets, and thus exacerbate the curse of correlation. This is illustrated by the above example as well as in our later experiments.

% \vspace{-2mm}
\subsection{Why Maximizing Entropy?}
%\vspace{-2mm} 
To tackle the curse of correlation in security games, the ideal approach is to come up with an accurate model to capture  the attacker's partial observation (also referred to as an information \emph{leakage model} henceforth for convenience), and then solve the model to obtain the defender's optimal defending strategy. For example, one natural choice is to represent the problem as an extensive-form game (EFG) since information sets in EFGs naturally capture what the attacker knows at each time step.  However we argue that these approaches suffer from several critical drawbacks.  

\begin{enumerate}
	\item[] {\bf  \rom{1}. Unavailability of an Accurate Leakage Model.} The attacker's choice of target monitoring depends on many hidden factors, thus is highly unpredictable. Therefore, it is  typically very difficult to know which targets are leaking information -- otherwise the defender could have resolved the issue in the first place via other approaches. As a result, it is usually not possible to obtain an accurate leakage model. However, as we will illustrates in our experiments, optimizing over an inaccurate leakage model can even be harmful to the defender compared to doing nothing.   

%be problematic. Recall the example illustrated in Figure \ref{fig:example}. If, say, the real leaking target is instead the bottom left cell, while not the top-left one, it is not a surprise that optimizing over this misspecified model does not mitigate the leakage issue. We refer the reader to the appendix for a concrete justification.

\item[] {\bf \rom{2}. Scalability and Computational Barriers.} Even if the defender has an accurate leakage model, computing the optimal defender strategy against the leakage model is intractable generally.  As we mentioned in the related work, even in the simplest possible model -- zero-sum games, no spatio-temporal structures, no scheduling constraints and a single target leaking information-- computing the optimal defender strategy is NP-hard  \cite{Xu15}. Obviously, the problem only becomes more difficult  in our (more complicated) spatio-temporal settings with scheduling constraints. If we adopt the EFG representation, the size of the information sets increases exponentially in the number of time steps, number of patrollers and possible leaking targets. For example, a game with $9$ targets, $9$ time units and $2$ patrollers has  $O(9^{18}\cdot 2^9)$ information sets, which cannot be solved by any state-of-the-art EFG solvers. However, our algorithm can compute the max-entropy implementation for such an instance in a few seconds in our experiments.    

%The problem becomes even more intractable when multiple targets leak information since the defender then has to care about multi-way correlations among targets. 

\item[] {\bf \rom{3}. Vulnerability to Attacker's Strategic Manipulations.} Another  concern about any optimal solution tailored to a specific leakage model is that such a solution may be easily ``gamed'' by the attacker. In particular, the optimal solution naturally biases towards the leaking targets by assigning more security forces to these targets. This however opens the door for the attacker to strategically manipulate the defender's belief on leaking targets, e.g., by intentionally spreading misleading information, with the goal of shifting the defender's defense away from the attacker's prime targets. As we show in our experiments, this could cause significant loss to the defender. 
\end{enumerate}

\vspace{1mm} 
\noindent {\bf Entropy maximization -- a more robust solution.} These barriers motivate our adoption of a more  robust (though inevitably more conservative) approach. Particularly, we propose to first compute the optimal defender strategy assuming no leakage and then adopt the max-entropy implementation of its marginal vector. 
%\vspace{-3mm} 
%\setlength\extrarowheight{3pt}
%\begin{center}
%	\begin{tabular}{|L{8.2cm}|}
%		\hline 
%	1.  Compute optimal defender strategy assuming no leakage \\
%
%	2. Adopt the max-entropy implementation of its marginal \\
%	\begin{center}
%		\vspace{-4mm}
%		{\bf Entropy Maximization in Security Games}
%		\vspace{-3mm}
%	\end{center} \\  %\vspace{2mm} \\
%		\hline
%	\end{tabular}
%\end{center}
Our choice of max entropy is due to at least three reasons. First, the max-entropy strategy is the most random, thus unpredictable, defender strategy. When the defender is uncertain about which target is leaking information (the setting we are in), we believe that taking the most random strategy is one natural choice. Second, the max-entropy distribution exhibits substantial approximate stochastic independence among the protection statuses of  targets\footnote{This is widely observed in practice, and also theoretically proved in some settings, e.g., matchings \cite{Kahn1997}.}, so that the protection status of any leaking targets does not carry much information about that of others.  Third, as we will illustrate in Section \ref{sec:exp}, empirically, our max entropy approach performs extremely well in comparisons with several other alternatives; in fact, in some settings,  it achieves a solution quality that is even close to the optimal defender utility under no leakage! Given such significant empirical results, entropy maximization  clearly stood out as a powerful approach to  address information leakage.

From a practical perspective, the max-entropy approach also enjoys several advantages. First,  it does not require a concrete leakage model. Instead, it seeks to reduce the overall correlation among the statuses of all targets, thus serves as a robust solution. Second, this approach is easily ``compatible" with any current deployed security systems since it does not require any change to previously deployed algorithms while only adds additional randomness (in some sense, this is a \emph{strictly} better solution than previous ones). This is particularly useful in domains where re-building a new security system is not feasible or too costly. 

Efficient computation remains a challenge for our approach due to the widely-known difficulty of computing the max-entropy distribution over combinatorial structures. Fortunately, we show  in the next two sections that they admit efficient algorithms in several important and well-motivated security settings, as well as good heuristics in general.  
%\footnote{As a side comment, the max-entropy distribution is the most uninformative distribution, thus is most unpredictable by attackers. Even when}

\section{Efficient Max-Entropy Implementation  for Spatio-Temporal Security Games}
\label{sec:MaxEntro} 
%\vspace{-1mm}
In many applications, e.g., wildlife protection,  the security game is played out in space and time \cite{Gatti1,Fang2013,Yin2015}, which is termed the \emph{spatio-temporal} security game. In these domains,  the defender needs to move patrollers as time goes on. Due to the correlation among the patroller's consecutive moves, these games are more likely to suffer from CoC. In this section, we seek to mitigate CoC by entropy maximization. In particular, given any marginal vector (obtainable from any algorithm solving the original game), we seek to sample its max-entropy implementation. %Notice that the max-entropy implementation does \emph{not} change the solution quality at all, but  produces the most unpredictability possible.

Like most previous work, we focus on discretized spatio-temporal security games. Such a game is described by a $T\times N$ grid graph 
\begin{figure}[H]
\centering
		\includegraphics[bb=200bp 120bp 680bp 500bp,clip,scale=0.28]{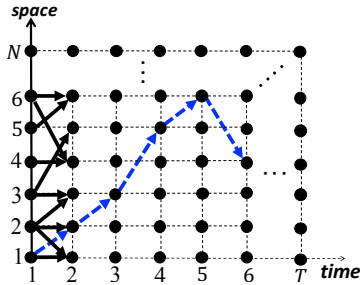}%
%	\vspace{-4mm}
	\caption{\label{fig:SpatTemp}Structure of a Spatio-Temporal Security Game}
\end{figure}  
$G=(V,E)$ indicating a problem with $N$ cells and $T$ time layers (see Figure \ref{fig:SpatTemp}). We use $v_{t,i}$ to denote a grid node, representing cell $i$ at time $t$. Each $v_{t,i}$ is  treated as a target, so there are $n = T\times N$ targets.  The directed edges between each consecutive time layers denote the patroller's feasible moves between cells within consecutive time layers.
Such feasibility usually incorporates speed limit, terrain and other real-world constraints.  Figure \ref{fig:SpatTemp} depicts some  feasible moves between time layer $1$ and $2$. A feasible patrol path is a path in $G$ starting from time $1$ and ends at time $T$ (e.g., the dashed path in Figure \ref{fig:SpatTemp}). Note that there are exponentially many patrol paths. We assume that the defender has $k$ \emph{homogeneous} patrollers, so a defender pure strategy corresponds to the \emph{set of nodes} covered by $k$ feasible patrol paths.  
We start by proving that the max-entropy implementation problem for spatio-temporal security games is computationally hard.

\begin{theorem}\label{thm:hard}
	It is \#P-hard to sample the max-entropy implementation for spatio-temporal security games even when there are \emph{two} time layers. 
\end{theorem}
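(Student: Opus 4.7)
The plan is to invoke Lemma \ref{lem:EntroSample}, which establishes polynomial-time equivalence between sampling the max-entropy implementation and solving the generalized counting problem $C(\alpha) = \sum_{S \in \S} \prod_{v \in S} \alpha_v$. It therefore suffices to prove that generalized counting is \#P-hard in the 2-layer setting. In the 2-layer case, the bipartite feasibility graph $H = (L, R, E_H)$ spans the two time layers, and each pure strategy $S \in \S$ is the vertex set $V(M)$ spanned by some $k$-matching $M$ in $H$; equivalently, $\S$ is the collection of $(k,k)$-balanced subsets of $L \cup R$ whose induced subgraph admits a perfect matching.

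My reduction would be from Valiant's \#P-hard problem of computing the permanent of a $0/1$ matrix, equivalently counting perfect matchings in a bipartite graph \cite{Valiant1979}. Given an instance $G$ with $|L_G| = |R_G| = n$, I would construct a 2-layer spatio-temporal game whose feasibility graph $H$ augments $G$ with a small ``witness'' gadget---for example, private dummy vertices attached to each edge of $G$ so that selecting edge $e = (u,v)$ in a matching of $H$ forces a unique dummy vertex to be covered. With a suitable patroller count $k$ and weight vector $\alpha$, distinct perfect matchings of $G$ would then induce distinct node sets in $H$, and $C(\alpha)$ on the resulting instance would encode $|\mathrm{PM}(G)|$ (or, with scaled weights, the full permanent of a prescribed matrix).

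The main obstacle is precisely the node-set representation of pure strategies: two distinct $k$-matchings in a bipartite graph can cover the same vertex set whenever the induced subgraph contains an even alternating cycle, so matchings and pure strategies are not in bijection without intervention. Forbidding even cycles would make $H$ a forest, on which matching counting is polynomial-time solvable, so the gadget must tolerate cycles while pinning the per-node-set matching multiplicity to a known value (ideally one, enforced by the witnesses). Designing the gadget within the rigid bipartite 2-layer structure so that this multiplicity is controlled and the underlying permanent-hardness is preserved is the delicate step. As an alternative route, one could evaluate $C(\alpha)$ at polynomially many distinct weight vectors and exploit the multilinearity of $C(\alpha)$ in the vertex weights to interpolate out collisions, provided the number of multiplicity classes across $\S$ is polynomially bounded.
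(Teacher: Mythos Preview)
Your proposal correctly identifies the reduction target (generalized counting via Lemma~\ref{lem:EntroSample}) and the central obstacle: pure strategies are \emph{node sets}, not matchings, so distinct $k$-matchings sharing a vertex set collapse. But you do not actually overcome that obstacle. The witness-gadget idea is left unspecified, and it is unclear how to realize it inside a rigid bipartite 2-layer graph: attaching a distinguishing vertex to an edge $(u,v)$ would require subdividing the edge, which breaks both the bipartition and the matching structure. Your fallback interpolation argument needs the number of multiplicity classes across $\S$ to be polynomially bounded, which you neither establish nor is plausible for general bipartite graphs. As written, the proposal is a plan with the hard step still open.

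The paper sidesteps the multiplicity issue entirely by reducing from a different \#P-hard problem: counting \emph{bases of a transversal matroid} \cite{Colbourn1995}. Given $G=(L\cup R,E)$ with $|L|=k\le |R|$, a base of the transversal matroid $\mathcal{M}(G)$ is precisely a $k$-subset $T\subseteq R$ such that $L\cup T$ admits a perfect matching. Now observe that any pure strategy covering exactly $2k$ nodes must cover all of $L$ together with $k$ matchable nodes of $R$, so pure strategies in $\S_{2k}$ are in \emph{bijection} with bases of $\mathcal{M}(G)$. Restricting to marginal vectors $\mathbf{x}$ with $x_{1,i}=1$ for all $i\in L$ and $\sum_j x_{2,j}=k$ forces every implementing mixed strategy to be supported on $\S_{2k}$, so by Lemma~\ref{lem:EntroSample} max-entropy sampling on such marginals is at least as hard as generalized counting over $\S_{2k}$, hence at least as hard as counting transversal-matroid bases. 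The key point you should take away: transversal-matroid bases are already defined as node sets rather than matchings, so the collapse you are fighting simply never arises.
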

\begin{proof}
	
	When there are two time steps, the game structure  corresponds to a bipartite graph ($T=2$ in Figure \ref{fig:SpatTemp}). It is important to notice that a pure strategy here does \emph{not} simply correspond to a bipartite matching of size $k$, therefore we cannot reduce from the problem of counting size-$k$ matchings. This is because the selected $k$ edges are allowed to share nodes.  Moreover,  our definition of a pure strategy is the set of covered targets, while not the edges themselves. In fact, sometimes one pure strategy can be achieved by different set of $k$ edges. 
	
	To prove the theorem, we reduce from the problem of counting bases of a transversal matroid, which is known to be \#P-complete \cite{Colbourn1995}. Given any bipartite graph $G=(L\cup R, E)$ with $|L|=k$, $|R| = n$ and $k \leq n$, any set $T \subseteq R$ is an independent set of the transversal matroid $\M (G)$ of $G$ if there exists a matching of size $|T|$ in the subgraph induced by $L \cup T$; Such a $T$ is a base if $|T| = |L| = k$. 
	
	Given any bipartite graph $G=(L\cup R, E)$ with $k = |L| \leq |R| = n$, we reduce counting bases of the transversal matroid $\M(G)$ to the max-entropy implementation for the two-time-layer spatio-temporal security games on graph $G$.\footnote{Though the definition of spatio-temporal security games requires that each time layer has the same number of nodes, this requirement is not essential since one can always add  \emph{isolated} nodes to each time layer to equalize the number of nodes.} Let $\S_{2k}$ denote the set of pure strategies that cover exactly $2k$ nodes. We first reduce counting bases of $\M(G)$ to counting $\S_{2k}$. This simply because if a pure strategy covers $2k$ nodes, it must cover all $k$ nodes in $L$ and another $k$ nodes in $R$, and these $2k$ nodes are matchable. It is not hard to see that elements in $\S_{2k}$  and $\M (G)$ are in one-to-one correspondence.

	Since counting reduces to generalized counting, we finally reduce generalized counting over set $\S_{2k}$ to max-entropy implementation of the following special subset of marginal vectors $\X_{2k} = \{\mathbf{x} \in [0,1]^{k+n}: x_{1,i} = 1, \forall i \in L; \sum_{i=1}^n x_{2,i} = k  \}$. It is easy to see that any mixed strategy that matches a marginal vector $\mathbf{x} \in \X_{2k}$ must support on $\S_{2k}$. Therefore,  when considering max-entropy implementation of any  $\mathbf{x} \in \X_{2k}$, we can w.l.o.g. restrict the set of  pure strategies to be $\S_{2k}$. By Lemma \ref{lem:EntroSample}, generalized counting over $\S_{2k}$ reduce to max-entropy implementation for  any  $\mathbf{x} \in \X_{2k}$.  

\end{proof} 
%\begin{proofsketch}
%By Lemma \ref{lem:EntroSample}, we only need to prove that the generalized counting problem over the set of pure strategies is \#P-hard. When $T=2$, $G$ is a bipartite graph.
%Our proof is by a reduction from the problem of  counting bases of a transversal matroid (\#P-complete \cite{Colbourn1995}). Particularly, we construct a game with $k$ nodes at the left side of $G$ and $n$ nodes at the right side.  Any pure strategy must contain all the $k$ nodes at the left and has the flexibility of choosing $k$ nodes from the right side. The counting problem is then essentially counting matchable subsets of size $k$ on the right side, which is precisely the problem of counting  bases of a transversal matroid.      
%\end{proofsketch}
%Invoking Lemma \ref{lem:EntroSample}, we prove Theorem \ref{thm:hard} by showing that the generalized counting version of the problem is \#P-hard.  Our reduction is from the \#P-complete problem of counting the bases of a transversal matroid  \cite{Colbourn1995}. 
Theorem \ref{thm:hard} suggests that it is unlikely that there is an efficient algorithm for the max-entropy implementation problem in general. Moreover, there is no known \emph{polynomial size} compact formulation for computing the max-entropy distribution,   
%Moreover,  there is evidence showing that  the generalized counting problem does not even admit an optimization formulation (e.g., integer or non-convex program) \cite{toda1991pp},
%\footnote{Loosely speaking, \#P-hard problems do not admit an optimization formulation unless the polynomial hierarchy collapses to its second level by Tota's theorem, which is believed to be unlikely.  },  
thus we cannot utilize the state-of-the-art optimization software to tackle the problem. Nevertheless, we show that the max-entropy approach, termed \en \,  henceforth, can be efficiently implemented in two well-motivated special settings.

\vspace{-1mm}
\subsection{\en \, for  Setting 1: Small Number of  Patrollers}\label{sec:const}
\vspace{-1mm}
In many applications, the defender only possesses a small number of patrollers. For example, in wildlife protection, the defender usually has only one or two patrol teams
%\footnote{In wildlife patrolling, rangers usually patrol as a team.}
at each patrol post \cite{Fang2016}; the US Coast Guard uses two patrollers to protect Staten Island ferries \cite{Fang2013}. In this section, we show that when the number  of patrollers is small (i.e., can be treated as a constant), the max entropy implementation can be sampled efficiently. %The main result of this section is the following theorem which shows the polynomial-time solvability of  the max-entropy implementation for spatio-temporal security games with a constant number of patrollers. 

%As a short hand, we simply call any two paths that are node disjoint $2$-paths. As discussed, the problem boils down to counting the number of $2$-paths. Though it is \#P-hard to count number of $1$-paths in general graph, the special structure of spatio-temporal security games allows efficient counting of $2$-paths. 

\begin{theorem}\label{thm:WildCount}
	When the number of patrollers is a constant, there is $\poly(N,T)$ time algorithm for the max-entropy implementation in spatio-temporal security games. %  can be sampled in $\poly(N,T)$ time. 
\end{theorem}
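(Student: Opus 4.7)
The plan is to apply Lemma~\ref{lem:EntroSample}: it suffices to design a polynomial-time algorithm for the generalized counting problem $C(\alpha) = \sum_{S \in \S} \prod_{i \in S} \alpha_i$ when $k$ is constant. At a high level, I will march through the $T$ time layers via dynamic programming and, at each layer, maintain a concise summary of the joint positions of the $k$ patrollers. Because the patrollers are homogeneous, the natural summary is the multi-set $M$ of size $k$ over $[N]$ encoding which cells are currently occupied; the number of such multi-sets is $\binom{N+k-1}{k} = O(N^k)$, which is polynomial for constant $k$.

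First, I would build a one-step feasibility oracle: given multi-sets $M$ at layer $t$ and $M'$ at layer $t+1$, decide whether the $k$ patrollers can transition from $M$ to $M'$. This reduces to testing whether the bipartite graph on $2k$ vertices (with an edge for each feasible single-step move in $G$ between layers $t$ and $t+1$) admits a perfect matching, which is $O(1)$ work for constant $k$. I would then set
\[
f(t, M) \;=\; \sum_{(M_1,\dots,M_t) \text{ feasible},\; M_t = M} \; \prod_{t' \leq t} \; \prod_{j \in \text{supp}(M_{t'})} \alpha_{t', j},
\]
and compute it via the recurrence
\[
f(t+1, M') \;=\; \Big(\prod_{j \in \text{supp}(M')} \alpha_{t+1, j}\Big) \cdot \sum_{M \,:\, M \to M' \text{ feasible}} f(t, M).
\]
Each layer costs $O(N^{2k})$ time, so the whole DP runs in $O(T \cdot N^{2k}) = \poly(N, T)$ time.

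The subtle step---and the main obstacle to a clean correctness argument---is that this DP is parameterized by multi-set \emph{trajectories}, whereas $\S$ is indexed by coverage \emph{sets}. Two distinct feasible trajectories can cover the same target set whenever some layer $t$ has fewer than $k$ distinct cells occupied (so that the size-$k$ multi-set supported on $\text{supp}(M_t)$ is not uniquely determined), and naively summing over trajectories therefore over-counts elements of $\S$ in a way that depends on the cell-set sequence. I would address this by augmenting the DP state with a small bookkeeping tag: in addition to the current cell-set $B_t$, I would record which of the $O(1)$ size-$k$ multi-sets supported on $B_t$ are reachable via some feasible prefix trajectory. For constant $k$ there are only $O(1)$ such subsets, so the augmented state space is still $O(N^k)$ per layer. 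With this augmentation, each realizable cell-set sequence $(B_1,\dots,B_T)$---equivalently each $S \in \S$---is visited exactly once, and its contributed weight is precisely $\prod_{(t,i) \in S} \alpha_{t,i}$. Plugging the resulting counting algorithm into Lemma~\ref{lem:EntroSample} then yields the claimed polynomial-time sampler for the max-entropy implementation.
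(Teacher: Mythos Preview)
Your approach is essentially the paper's: reduce to generalized counting via Lemma~\ref{lem:EntroSample}, then dynamic-program layer by layer over the joint configuration of the $k$ patrollers. The paper only spells out $k=2$---where size-$2$ multi-sets and their supports are in bijection, so the over-counting you flag cannot occur---and simply asserts that the argument ``easily generalizes''; you go further by correctly observing that for $k\ge 3$ distinct multi-set trajectories can yield the same coverage set $S$, and your reachable-multi-set augmentation of the state (which keeps the state space $O(N^k)$ since the number of size-$k$ multi-sets supported on a set of size $\le k$ is $O(1)$) is a valid fix that the paper does not make explicit.
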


\begin{proof}
	Theorem \ref{thm:WildCount} is proved by developing an efficient algorithm for the  generalized counting version of the problem (which is all we need to sample a max-entropy implementation by Lemma \ref{lem:EntroSample}). For ease of presentation, our description focuses on the case with two patrollers, though it  easily generalizes to a constant number of patrollers.  We propose a dynamic program (DP) that exploits the natural chronological order of the targets along the temporal dimension.\footnote{Dynamic programming is widely used in counting problems. See, e.g., \cite{DP1,DP2} as well as the remarks in \cite{DP3}. The novel parts  usually lie at careful analysis of the problem to uncover the proper structure for DP.} 
	
	Let us call a pure strategy $2$\emph{-path}, since each of the two patrollers takes a path on $G$.  Our goal is to compute the weighted count of all $2$-paths in grid graph $G$, each weighted by  the product of the node weights it traverses. 
	Our goal is to compute the weighted count of combinations of two paths in $G$ (one for each patroller), where the weight is the product of the node weights that the two paths traverse. 
	Let $\{\alpha_{t,i} \}_{t \in [T], i\in [n]}$ be any given weight set.  Obviously, the counting problem is easy if $T=1$, i.e., only one time layer. Our key observation is that the solution for $T=t$ can be constructed by utilizing the solutions for $T=t-1$.  Particularly, for any $1 \leq i \leq j \leq N$, we use $\ddp (i,j;t)$ to denote the solution to the counting problem  restricted to the truncated graph with only time layers $1,2,..,t$, and satisfies that the two patroller must end at cell $i,j$ at time $t$. Observe that $\ddp (i,j;1) = \alpha_{1,i} \alpha_{1,j}$ when $i \not = j$ and $\ddp(i,i;1) = \alpha_{1,i}$. We remark that our algorithm description in the main section intentionally ``omitted" the trivial conner case of $i = j$ for ease of presentation.   We then use the following update rule for $t \geq 2$: $\ddp (i,j;t)=$
	\[ \begin{cases} 
	\alpha_{t,i} \alpha_{t,j} \cdot \sum_{(i',j') \in \mbox{pre}(i,j)} \ddp(i',j';t-1) &  \text{ if }i < j \\
	\alpha_{t,i}\cdot \sum_{(i',j') \in \mbox{pre}(i,j)} \ddp (i',j';t-1) & \text{ if } i = j
	\end{cases}
	\]
	where	 $\mbox{pre} = \{(i',j'): i' \leq j' \, s.t. \, v_{t-1,i'},v_{t-1,j'}$ can reach $v_{t,i},v_{t,j}$  $\}$ is essentially the set of all pairs of nodes that can reach $v_{t,i},v_{t,j}$.  Note that the solution to the generalized counting problem is $\sum_{i \leq j} \ddp(i,j;T)$. 
	The correctness of the algorithm follows from the observation that if the two patrollers are at $v_{t,i}$ and $v_{t,j}$, they must come from $v_{t-1,i'}$ and $v_{t-1,j'}$ for certain $(i',j') \in \mbox{pre}(i,j)$. The updating rule simply aggregates all such choices. The algorithm runs in $\poly(N,T)$ time.

\end{proof}

\vspace{-1mm}
\subsection{\en \, for Setting 2: Air Marshal Scheduling}
\label{sec:AirMarshal}
\vspace{-1mm}
One important task faced by the Federal Air Marshal Service (FAMS) is to schedule air marshals to protect \emph{international} flights. In this case, the schedule of each air marshal is a \emph{round trip} \cite{Kiekintveld2009}, and the game can be described as a spatio-temporal security game with two time layers. In this section, we show that the max-entropy implementation  can be sampled efficiently in this setting, \emph{despite} that  the \#P-hardness in Theorem \ref{thm:hard} holds for the case of two time layers. Our algorithm crucially exploits certain ``order'' structure of the air marshal's schedules. 

%From the perspective of a certain airport, we aims to schedule a given number $k$ of air marshals for protecting round-trip flights departing from the airport. As we see in Section \ref{sec:curse}, 
We start by describing the problem.  FAMS seeks to allocate $k$ homogeneous air marshals to protect round-trip international flights originating from certain domestic  city  to different outside cities. These round-trip
\begin{wrapfigure}{r}{69mm}
	\vspace{-4mm}
	\begin{center}
		\includegraphics[bb=120bp 160bp 660bp 450bp,clip,scale=0.25]{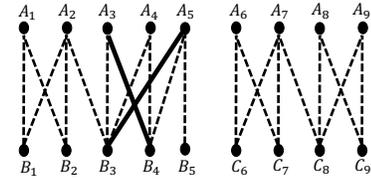}%
	\end{center}
	\vspace{-4mm}
	\caption{Consistent round-trip flights between a domestic city and two outside cities. }\label{fig:FAMS}
	\vspace{-3mm}
\end{wrapfigure} 
 flights constitute a bipartite graph $G=(A\cup B, E)$ in which nodes in $A$ [$B$] correspond to all originating [return] flights; $e=(A_i, B_j) \in E$ iff $e$ forms a \emph{consistent} round trip.
Note that $G$ is  a union of multiple isolated smaller bipartite graphs, each containing all flights between two cities (see Figure \ref{fig:FAMS}).  This is because any flight from $a$ to $b$ can never form a round trip with a flight from $c$ to $a$. We will call each isolated bipartite graph a \emph{component}.   %Obviously, not any two flights $A_i, B_j$ can form consistent round-trip flights.  
The following are natural constraints on the structure of $G$ (which makes the problem more tractable than general settings):  $(A_i,B_j) \in E$ if (1) the destination of $A_i$ is the departure city of $B_j$; (2) the arrival time of $A_i$, denoted as $arr(A_i)$, and the departure time of $B_j$, denoted as $dep(Bj)$, satisfy $dep(B_j ) - arr(A_i) \in [T_1,T_2]$ for constants $T_2 >T1 > 0$.%\footnote{Air marshals need certain time  to transfer between flights and also cannot stay at an airport for too long due to federal laws. }

\begin{theorem}\label{thm:AirMaxEntropy}
	The max-entropy implementation for the FAMS problem with round trips can be sampled in polynomial time. %$\poly(n,k)$ time, where $n = |A \cup B|$ is the number of total flights. 
\end{theorem}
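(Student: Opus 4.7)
The plan is to combine Lemma~\ref{lem:EntroSample} with a polynomial-time algorithm for the generalized counting problem over $\S$. Two structural features drive the construction: a decomposition of $G$ across (domestic, outside) city pairs, and, within each such piece, the ``staircase'' interval structure induced by the time window $[T_1,T_2]$.

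First, every edge of $G$ lies within a single city pair, so $G$ is the disjoint union of bipartite components $G_1,\ldots,G_m$. A pure strategy (which in FAMS amounts to a matching of size $k$, as each marshal covers one round trip and each flight hosts at most one marshal) therefore splits as a choice of a matching of size $c_\ell$ in each $G_\ell$ with $\sum_\ell c_\ell = k$. Writing $f_\ell(c)$ for the generalized count of vertex sets $V(M)$ of matchings $M$ of size $c$ in $G_\ell$, the total generalized count equals the coefficient of $z^k$ in $\prod_\ell\bigl(\sum_{c\ge 0} f_\ell(c)\,z^c\bigr)$, a polynomial-time convolution once each $f_\ell(c)$ is in hand.

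Next, fix a component and sort its outgoing flights $A_1,\ldots,A_p$ by $arr(\cdot)$ and its return flights $B_1,\ldots,B_q$ by $dep(\cdot)$. The constraint $dep(B_j)-arr(A_i)\in[T_1,T_2]$ makes $N(A_i)$ a contiguous interval $[l_i,r_i]$ with $l_i,r_i$ nondecreasing in $i$. The crucial combinatorial claim is that each realizable vertex set $V(M)$ is the vertex set of a \emph{unique monotone} matching: writing its matched endpoints in sort order as $A_{i_1}<\cdots<A_{i_c}$ and $B_{j_1}<\cdots<B_{j_c}$, one has $A_{i_t}\sim B_{j_t}$ for every $t$. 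I would establish this by an uncrossing argument. If some matching contained edges $(A_i,B_j)$ and $(A_{i'},B_{j'})$ with $i<i'$ but $j>j'$, then $arr(A_i)<arr(A_{i'})$ and $dep(B_{j'})<dep(B_j)$, so
\[
T_1\le dep(B_{j'})-arr(A_{i'})<dep(B_{j'})-arr(A_i)<dep(B_j)-arr(A_i)\le T_2,
\]
and analogously $dep(B_j)-arr(A_{i'})\in[T_1,T_2]$; hence swapping the partners preserves feasibility and the covered vertex set. Iterating yields a monotone matching on the same vertex set, and monotone matchings on a prescribed $(S_A,S_B)$ are clearly unique, giving the desired bijection between coverage sets and monotone matchings.

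Counting monotone matchings is then a clean chain DP. Let $D(i,j,t)$ be the generalized count over monotone matchings of size $t$, restricted to $A$-vertices in $\{A_1,\ldots,A_i\}$ and $B$-vertices in $\{B_1,\ldots,B_j\}$, whose \emph{last} matched pair is $(A_i,B_j)$. Base case $D(i,j,1)=\alpha_{A_i}\alpha_{B_j}\,\mathbf{1}[A_i\sim B_j]$, and for $t\ge 2$,
\[
D(i,j,t) \;=\; \alpha_{A_i}\alpha_{B_j}\,\mathbf{1}[A_i\sim B_j]\cdot\!\!\sum_{i'<i,\,j'<j} D(i',j',t-1).
\]
Setting $f_\ell(c)=\sum_{i,j}D(i,j,c)$ for $c\ge 1$ and $f_\ell(0)=1$, two-dimensional prefix sums fill the table in $O(pqk)$ time per component; combined with the component-level convolution this is polynomial in the input size, and Lemma~\ref{lem:EntroSample} converts the counter into the claimed polynomial-time max-entropy sampler. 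The main obstacle is the monotone canonical form: without it, we would be back to counting matchings in a bipartite graph, a close cousin of the \#P-hard problem used in Theorem~\ref{thm:hard}; with it, the FAMS time-window structure collapses the problem onto a tractable chain DP, and the remainder of the argument is routine bookkeeping.
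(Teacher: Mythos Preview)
Your proposal is correct and follows essentially the same route as the paper: reduce to generalized counting via Lemma~\ref{lem:EntroSample}, establish a bijection between pure strategies and ordered (monotone) matchings via the same uncrossing argument, and count ordered matchings by a chain DP indexed by an upper-right ``boundary'' pair and matching size. The only cosmetic differences are that the paper parameterizes its table by all ordered matchings contained in $E_{l,r}$ (enumerating the uppermost edge) rather than by the last matched pair, and it runs a single DP over a global ordering of the flights instead of your explicit per-component DP followed by a convolution; both yield the same polynomial bound.
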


\subsection*{Proof of Theorem \ref{thm:AirMaxEntropy}} 
By Lemma \ref{lem:EntroSample}, we only need to  design an efficient algorithm for the generalized counting version of the problem. Let $G = (A \cup B, E)$ denote the bipartite graph for the air marshal scheduling problem, $|A| = n_1, |B| = n_2$.  Recall that $G$ is  a union of multiple isolated components, each containing all flights between $a$ and another city (see Figure \ref{fig:FAMS}).  
%This is because  any flight from $a$ to city $b$ can never form a round trip with a flight from another city $c$ to $a$. We will call each isolated bipartite graph a \emph{component}. Figure \ref{fig:FAMS} has two components.  
Within each component, we sort the flights in $A$  by their \emph{arrival} time and flights in $B$ by by their \emph{departure} time.  
%\begin{figure}[H]
%	\begin{center}
%		\includegraphics[bb=120bp 160bp 660bp 450bp,clip,scale=0.25]{fig/bipartite.pdf}%
%	\end{center}
%	\caption{Consistent round-trip flights between a domestic city and two outside cities. }\label{fig:FAMS}
%\end{figure} 

We now show that the generalized counting over the set of defender pure strategies admits a polynomial time algorithm. %By Lemma \ref{lem:EntroSample}, this proves the theorem. 
Our algorithm crucially exploits the following ``order''   structure.

\begin{definition}\label{def:order} [Ordered Matching]
	In a bipartite graph $G=(A \cup B, E)$, a matching  $M = \{e_1,...,e_k \}$  is called an \emph{ordered matching} if for any edge  $e = (A_i,B_j)$ and $e' = (A_{i'},B_{j'})$ in $M$, either $i > i'$, $j > j'$ or $i < i'$, $j < j'$.  
\end{definition}
Visually, any two edges $e,e'$ in an ordered matching satisfy that $e$ is either ``above'' or ``below'' $e'$ -- they do not cross. 

Since each flight has at most one air marshal, any assignment of air marshals must correspond to a matching in $G$.   However, a pure strategy $S$ -- i.e., a set of covered flights -- can be accomplished by different matchings. For example, the set $S = \{ A_1, A_2, B_1, B_2\}$ in Figure \ref{fig:FAMS} can be achieved by the matching $\{(A_1, B_1), (A_2, B_2)\}$ or the matching $\{(A_1, B_2), (A_2,B_1) \}$. Nevertheless, only the matching $\{(A_1, B_1), (A_2, B_2)\}$  is ordered. It turns out that pure strategies and \emph{ordered} $k$-matchings are in one-to-one correspondence.
\begin{lemma}\label{lem:order}
	In the air marshal scheduling problem, there is a way to order flights in $A$ and $B$  so that pure strategies and size-$k$ ordered matchings  are in one-to-one correspondence. 
\end{lemma}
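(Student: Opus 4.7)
I plan to prove this by exhibiting an explicit global ordering and then using an exchange/uncrossing argument. Order the isolated components of $G$ arbitrarily; within each component, sort the $A$-flights by arrival time and the $B$-flights by departure time; concatenate these orderings across components to obtain global indices on $A$ and $B$. Because every edge of $G$ lies inside a single component and the induced component ordering is identical on the two sides, any crossing pair of edges under this global order must already lie inside a single component. Hence the ordered-matching condition is really a per-component statement, and I may reason one component at a time.

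The easy directions come first. Every ordered $k$-matching is in particular a feasible matching, so its endpoint set is a pure strategy, giving one direction of the correspondence. For uniqueness in the other direction, fix any pure strategy $S$; within each component the ordered-matching condition forces the $\ell$-th chosen $A$-flight (by arrival) to be paired with the $\ell$-th chosen $B$-flight (by departure), so at most one ordered matching on $S$ can exist.

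The main step is existence: given any pure strategy $S$, produce an ordered matching whose vertex set is exactly $S$. Starting from any matching $M$ with vertex set $S$, I will apply the following uncrossing swap whenever $(A_i, B_j), (A_{i'}, B_{j'}) \in M$ satisfy $i < i'$ and $j > j'$: remove these two edges and insert $(A_i, B_{j'})$ and $(A_{i'}, B_j)$. The key feasibility check, and the place where the problem structure matters, exploits the interval constraint $dep(\cdot) - arr(\cdot) \in [T_1, T_2]$. Since both original edges lie in $E$ we have $dep(B_j) - arr(A_i) \in [T_1, T_2]$ and $dep(B_{j'}) - arr(A_{i'}) \in [T_1, T_2]$. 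Because $arr(A_i) \le arr(A_{i'})$ and $dep(B_{j'}) \le dep(B_j)$, the two new differences
\[
dep(B_{j'}) - arr(A_i) \quad \text{and} \quad dep(B_j) - arr(A_{i'})
\]
are each sandwiched between the two original differences, hence both lie in $[T_1, T_2]$. So the swap yields another feasible matching on the same vertex set while strictly reducing the number of inversions of $M$; iterating terminates at an ordered matching whose endpoint set is still $S$.

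The main obstacle is precisely this sandwich/monotonicity check for the swap — it is the one place where I need to use that $A$ is sorted by arrival and $B$ by departure in a way compatible with the round-trip time window. Once that is verified, the rest is bookkeeping: inversions are a finite nonnegative integer that strictly decreases, so the process halts, and the resulting ordered matching is unique by the paragraph above. Combining existence, uniqueness, and the easy reverse direction gives the claimed bijection between pure strategies and size-$k$ ordered matchings.
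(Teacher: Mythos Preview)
Your proof is correct and follows essentially the same approach as the paper: order $A$ by arrival time and $B$ by departure time, then use an uncrossing/exchange argument where the feasibility of each swap is guaranteed by the interval constraint $dep(\cdot)-arr(\cdot)\in[T_1,T_2]$, and iterate until no crossings remain. The only minor difference is the termination measure: you use the number of inversions, whereas the paper uses the potential $\sum_{(A_i,B_j)\in M}|i-j|^2$; both strictly decrease under a swap, and your treatment of the multi-component ordering and of uniqueness is somewhat more explicit than the paper's.
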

\begin{proof}
	It is easy to see that any ordered $k$-matchings  corresponds to one pure strategy. We prove the reverse. Given any pure strategy $S$ consisting of $2k$ flights, let $\tilde{E} = \{e_1,...,e_k\}$ be any matching that results in $S$. We claim that if there exist two edges $e,e'  \in \tilde{E}$ with $e = (A_i,B_j)$ and $e' = (A_{i'},B_{j'})$ such that $i > i'$ and $j < j'$, then $(A_i,B_{j'})$ and $(A_{i'},B_j)$ must also be edges in $E$. %As a result, we can adjust the matching by using the edges $(A_i,B_{j'})$ and $(A_{i'},B_j)$ instead.  
	Since $e,e' \in \tilde{E}$, we must have $T_1 < dep(B_j) - arr(A_i) < T_2 $ and $T_1 < dep(B_{j'}) - arr(A_{i'}) < T_2 .$  Since flights in $A$ are ordered increasingly  by arrival time and flights in $B$ are ordered increasingly by departure time, we have $arr(A_i) \geq  arr(A_{i'})$ and  $dep(B_j) \leq dep(B_{j'})$. These inequalities imply $dep(B_{j'}) - arr(A_i) \leq dep(B_{j'}) - arr(A_{i'}) \leq T_2 $ and $dep(B_{j'}) - arr(A_i) \geq dep(B_{j}) - arr(A_{i}) \geq T_1 $, therefore $(A_i,B_{j'}) \in E$. Similarly, one can show that $(A_j,B_{i'}) \in E$. 
	
	As a result, we can adjust the matching by using the edges $(A_i,B_{j'})$ and $(A_{i'},B_j)$ instead. Such adjustment can continue until the matching becomes ordered. The procedure  will terminate within a finite time by a simple potential function argument, with potential function $f(\tilde{E}) = \sum_{e = (A_i,B_j) \in \tilde{E}} |i-j|^2$. In particular, the above adjustment always strictly decreases the potential function since $|i-j|^2 + |i' - j'| ^2 > |i - j'|^2 + |i' - j|^2$ if $i > i'$ and $j < j'$. The adjustment will terminate with an ordered matching. It is easy to see that the ordered matching is unique, concluding our proof.  
\end{proof}

Lemma \ref{lem:order} provides a way to reduce  generalized counting over the set of pure strategies  to generalized counting of size-$k$ ordered matchings. In particular, given any set of  non-negative weights $\mathbf{\alpha} \in \RR^{n_1+n_2}_+$, we define edge weight $w_e = \alpha_{A_i} \alpha_{B_j}$ for any $e = (A_i, B_j) \in E$. Therefore, the weight of any pure strategy equals the weight of the corresponding size-$k$ ordered matching with edge weights $w_e$'s.  It turns out that generalized counting of size-$k$ ordered matchings admits an efficient algorithm.

The main idea is to dynamically compute the generalized sum of size-$k$ ordered matchings according to some ``order" of the bipartite graphs. More specifically, 
define $E_{l,r} \subseteq E$ to be the set of edges that are ``under" $A_l$ and $B_r$, where $A_l \in A,B_r\in B$. Formally, any $e = (A_i,B_j) \in E$ is in $E_{l,r}$ iff $i \leq l, j \leq r$. We build the following dynamic programing table $\ddp (l,r;d)=\sum_{M: \, M \subseteq E_{l,r},|M|=d} w_{M}$, in which  $\ddp (l,r;d)$ is the sum of  the weight of all size-$d$ ordered matchings with edges in $E_{l,r}$. Now, to compute $\ddp (l,r;d)$,  we only need to enumerate all the possibilities of the uppermost edge in the ordered matching, given that $\ddp (i,j;d)$s are known for $i < l$ and $j < r$.  This can be done by a dynamic program (Algorithm \ref{alg:CountMatch}).   

\begin{algorithm}[ht]
	\begin{algorithmic}[1]
		\REQUIRE : $G=(A\cup B, E)$;  $w_e \geq 0$ for any $e \in E$.
		
		\ENSURE : $\sum_{M: \, |M|=d} w_{M}$ \hspace{4mm}\% $M$ is an ordered matching
		\vspace{1mm}
		
		\STATE {\bf Initialization}: $\ddp(l,r;0) = 1$ for $l = 0,..,n_1, r= 0,..,n_2$; $\ddp(0,r;d) = \ddp(l,0;d) = 0$ for all $d \geq 1, l =0,1,,...,n_1, r= 0,1,...,n_2$.  
		
		\STATE Update:  for $ d = 1,...,k, l=2,...,n_1, \, r = 2,...,n_2$:
		\begin{eqnarray*}
			& & 	\ddp(l,r; d) = T(l-1,r-1;d) +  \\
			& & \quad  \sum_{ \substack{e = (A_i,B_j) \in E_{l,r}\\ \text{ s.t. } i = l \text{ or }  j = r }}  w_e \cdot \ddp(i-1,j-1;d-1).
		\end{eqnarray*}
		
		\RETURN $\ddp(n_1,n_2;k)$.
	\end{algorithmic}
	\caption{Generalized Counting of Ordered $k$-Matchings	\label{alg:CountMatch}}
\end{algorithm}

\section{\ca: A Heuristic Sampling Algorithm} \label{sec:card}
%\vspace{-1mm}
In this section, we propose a heuristic sampling algorithm that implements any given implementable marginal vector $\mathbf{x}$ and potentially achieves high entropy. This algorithm works for \emph{general} security games (not only spatio-temporal ones), and is \emph{computationally efficient} as long as the underlying security game can be solved efficiently. 

At a high level, our idea is to design a \emph{randomized} implementation for  the celebrated Carath\'{e}odory's theorem, which is about the following \emph{existence} statement: for any bounded polytope $\P \subseteq \RR^n$ and any $\mathbf{x} \in \P$, there exists at most $n+1$ vertices of $\P$ such that $\mathbf{x}$ can 
be written as a convex combination of these vertices. Interpreting $\P$  as the convex   hull of defender pure strategies, this means any defender mixed strategy, i.e., a point in $\P$, can be decomposed as a distribution over at most $n+1$ pure strategies ($n$ is the number of targets).   We turn this existence statement into an efficient randomized algorithm, named {\bf CA}rath\'{e}odory {\bf R}andomized {\bf D}ecomposition (\ca). 
%\footnote{\cite{Grotschel1981} proves the existence of poly-time algorithms for the Carath\'{e}odory's theorem via the ellipsoid method. Our algorithm here is more  practical, and is much more efficiently than the ellipsoid method.}

Consider any  polytope $\P = \{\mathbf{z}: A \mathbf{z} \leq \mathbf{b}; M \mathbf{z} = \mathbf{c} \}$ explicitly represented by polynomially \begin{wrapfigure}{r}{63mm}
	\begin{flushright}
		\vspace{-2mm}
		\includegraphics[bb=30bp 160bp 450bp 470bp,clip,scale=0.33]{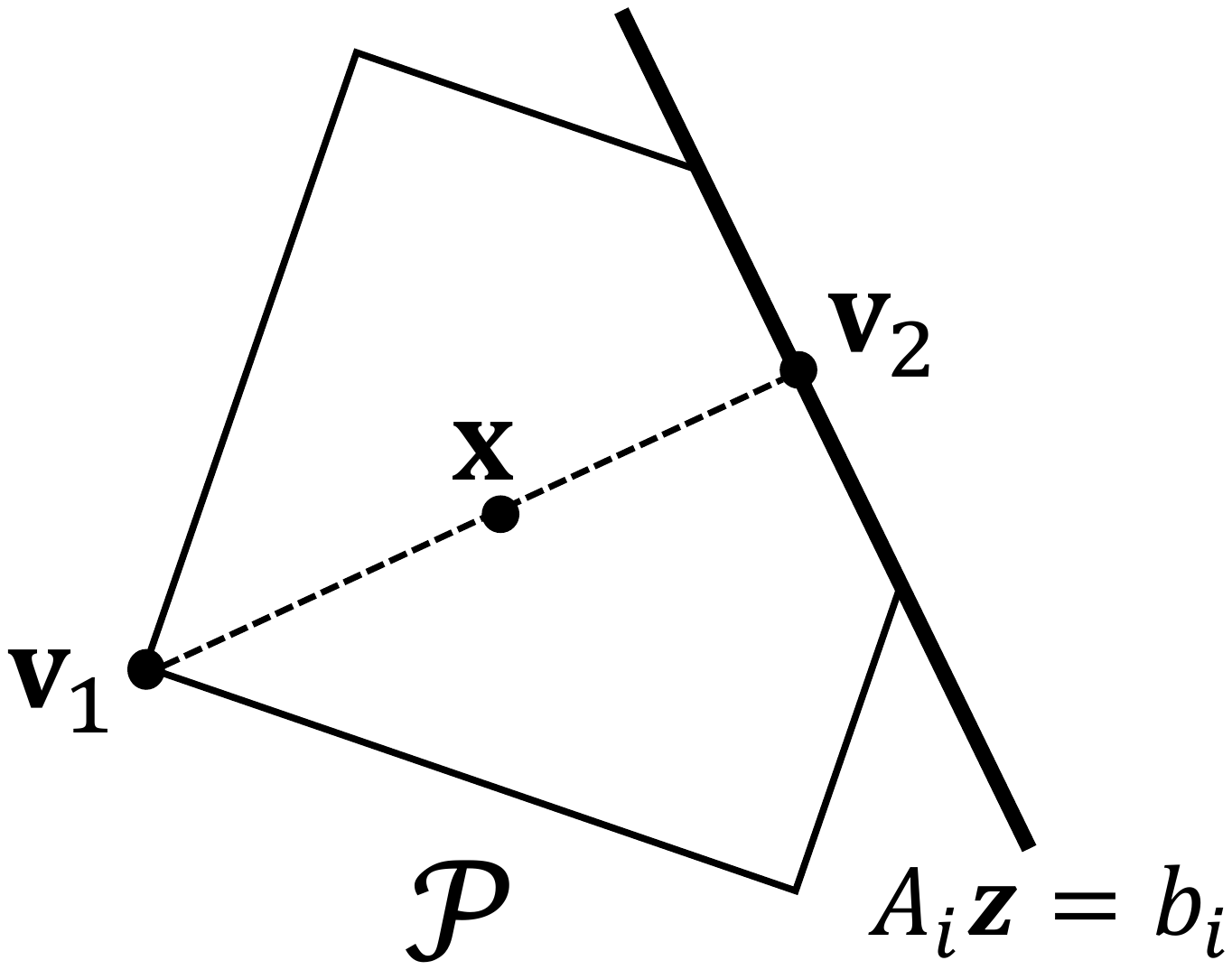}%
		\vspace{-1mm}
		\caption{\label{fig:card} \small \ca \, Decomposition.}
	\end{flushright}
	\vspace{-2mm}
\end{wrapfigure} 
many linear constraints, and any $\mathbf{x} \in \P$. We use $A_i,b_i$ to denote the $i$'th row of $A$ and $b$ respectively;  $A_i \mathbf{z} = b_i$ is a \emph{facet} of $\P$.  Geometrically, \ca \, randomly picks a vertex $\mathbf{v}_1 = \arg \max_{\mathbf{z} \in \P}  \langle \mathbf{a}, \mathbf{z} \rangle$ for a  linear objective $\mathbf{a} \in [0,1]^n$ chosen \emph{uniformly at random}. \ca \, then ``walks along'' the ray that originates from $\mathbf{v}_1$ and points to $\mathbf{x}$,  until it  crosses a facet of $\P$, denoted by $A_i \mathbf{z} = b_i$, at some point $\mathbf{v}_2$ (see illustrations in Figure \ref{fig:card}). Thus, $\mathbf{x}$ can be decomposed as a convex combination of $\mathbf{v}_1, \mathbf{v}_2$. \ca \,  then treats $\mathbf{v}_2$ as a new $\mathbf{x}$ and   decompose it within the facet $A_i \mathbf{z} = b_i$ recursively until $\mathbf{v}_2$ becomes a vertex. Details are presented in Algorithm \ref{alg:Caratheodory}. 

\begin{algorithm}
	\begin{algorithmic}[1]
		\REQUIRE $\P = \{\mathbf{z} \subseteq \RR^n: A \mathbf{z} \leq \mathbf{b}; M \mathbf{z} = \mathbf{c} \}$  and $\mathbf{x} \in \P$
		\ENSURE  $\mathbf{v}_1,...,\mathbf{v}_k$ and $p_1,...,p_k$ such that $\sum_{i=1}^k p_i  \cdot \mathbf{v}_i = \mathbf{x}$. 
		\vspace{1mm}		
		\IF{rank($M$) = $n$}
		\STATE Return the unique point $\mathbf{v}_1$ in $\P$ and $p_1 = 1$. 
		\ELSE
		\STATE	Choose $\mathbf{a} \in [-1,1]^n$ \emph{uniformly at random}
		\STATE  Compute $\mathbf{v}_1 = \arg \max_{\mathbf{z} \in \P}  \langle \mathbf{a}, \mathbf{z} \rangle $.
		\STATE	Compute $t = \min_{i: A_i (\mathbf{x} - \mathbf{v}_1) > 0} \frac{b_i - A_i \mathbf{x}_i}{A_i (\mathbf{x} - \mathbf{v}_1)}$; Let $i^*$ be the row obtaining $t$, and $\P' = \{\mathbf{z} \in \P: A_{i^*} \mathbf{z} = b_{i^*} \}$.  
		\STATE	$\mathbf{v}_2 = \mathbf{x}+t(\mathbf{x} - \mathbf{v}_1)$; \,  $p_1 = \frac{t}{t+1}, \, \, \,   p_2 = \frac{1}{1+t}$. 
		\STATE $[V',\mathbf{p}'] = \text{\ca}(\mathbf{v}_2,\P')$
		\RETURN $V = (\mathbf{v}_1, V' )$ and $\mathbf{p}=(p_1, p_2 \times \mathbf{p}' )$. 
		\ENDIF
	\end{algorithmic}
	\caption{\ca }
	\label{alg:Caratheodory}
	\vspace{-0mm}
\end{algorithm}

A crucial ingredient of \ca \,  is that each vertex $\mathbf{v}_1$ is the optimal vertex solution to a \emph{uniformly random} linear objective. Recall that the max-entropy distribution over any given support under no constraints is the uniform distribution. The  intuition underlying \ca \,  is that these randomly selected vertices will inherit the high entropy of their linear objectives. Notice that the decomposition generated by \ca \,  is different in each execution due to its randomness, therefore the strategies generated by \ca \,  are sampled from a very large support.

\section{Experimental Evaluations}\label{sec:exp}
%\vspace{-1mm}

\subsection{Comparisons with Optimal Approach}\label{subsec:CompOpt}
To our knowledge,  \cite{Xu15} is the only work that directly considers the computation of the optimal defender strategy against a specific leakage model and proposes an \emph{exponential} time algorithm for a \emph{zero-sum} security setting with no resource allocation constraints and one leaking target. They considered two information leakage models: 1. the defender has a probabilistic belief about the leaking target (\emph{probabilistic leakage}); 2. the leaking target is adversarially chosen by the attacker to minimize the defender's utility (\emph{adversarial leakage}).  \emph{These (simpler) settings are not what we focus on}, but our algorithms can be applied here. Therefore,  we experimentally test max-entropy algorithms in this setting, with the goal of examining how our approach performs compared to the rigorously optimal solution of \cite{Xu15}, termed \ri,  
%Our experiments, summarized in Figure \ref{fig:CompRigor},  demonstrate two advantages of the entropy maximization approach: computational efficiency and robustness to leakage model misspecification and attacker's strategic manipulation. 

%They also experimentally show that the max-entropy implementation of the optimal marginal computed assuming no leakage (denoted as \en \, henceforth) achieves good defender utility in their settings. This section picks up from the experiments of \cite{Xu15}, and further demonstrates two more advantages of the max-entropy based algorithms-- \emph{computational efficiency} and \emph{robustness}. We remark that  all the models and algorithms in this section are exactly from \cite{Xu15}.  We thus omit details here while only present our more systematic experimental comparisons, as shown in Figure \ref{fig:CompRigor}, which are not done before. 

We compare three algorithms: \ri, \en, and \ca \, (\en, and \ca \, are from Section \ref{sec:MaxEntro} and \ref{sec:card} respectively). All algorithms are implemented by MATLAB 2016 and run on a single machine with 2.7 GHz Intel Core i5 GPU and 8 GB memory. Figure \ref{exp:time} compares the running time of these algorithms. \en \, and \ca \,  are much more scalable than \ri.  Particularly, \ri \,  scales only to 25$\sim$30 targets due to its exponential running time, while \en \,  and \ca \, can solve instances of $300$ targets within $2$ minutes in our experiments. 

\begin{figure*}
	\centering
	\subfigure[Running Time Comparison]{
		\label{exp:time}
		\includegraphics[bb=0bp 0bp 550bp 320bp,clip,scale=0.25]{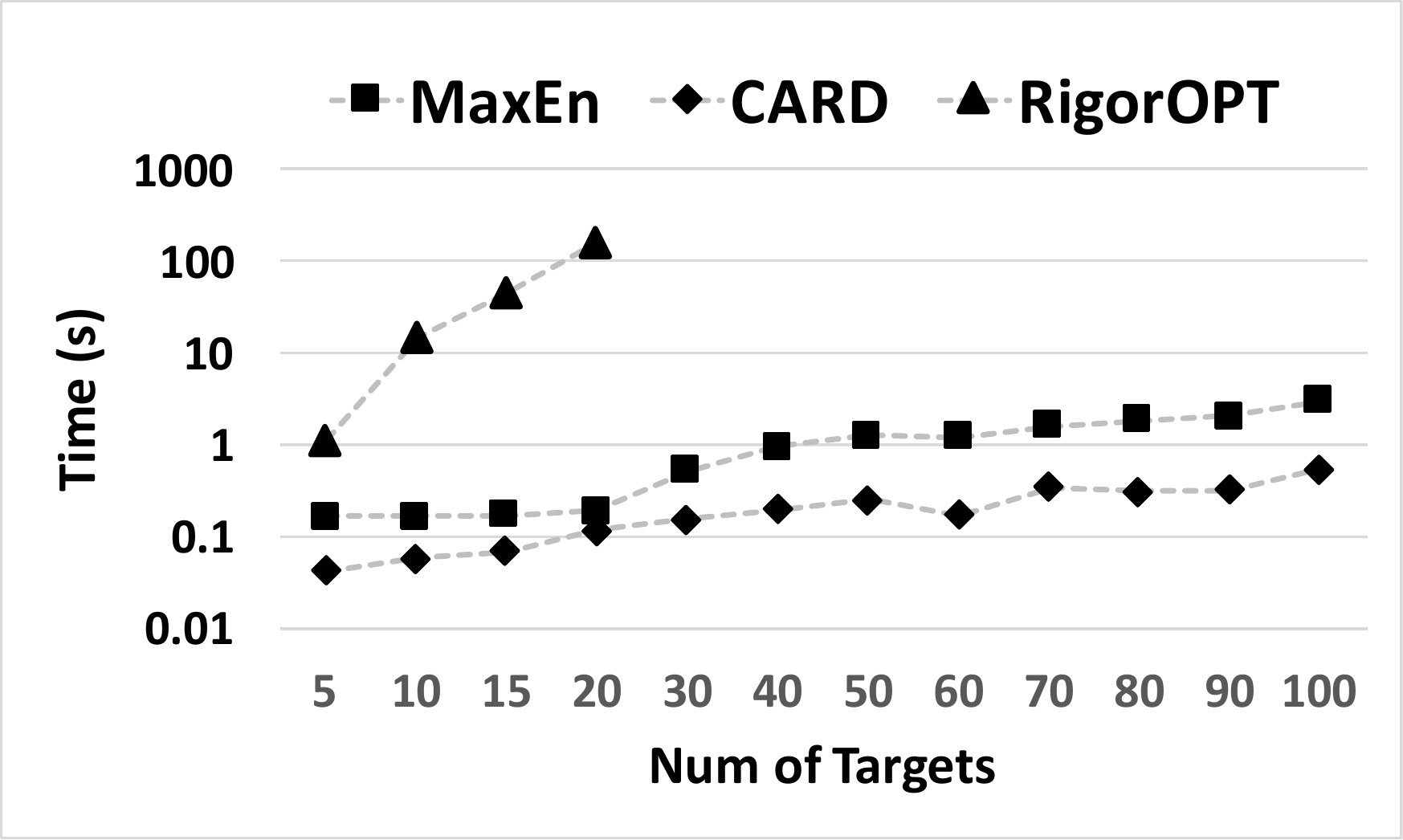}	
	}
	\subfigure[Probabilitic Leakage]{
		\label{exp:probL}
		\includegraphics[bb=0bp 0bp 550bp 320bp,clip,scale=0.25]{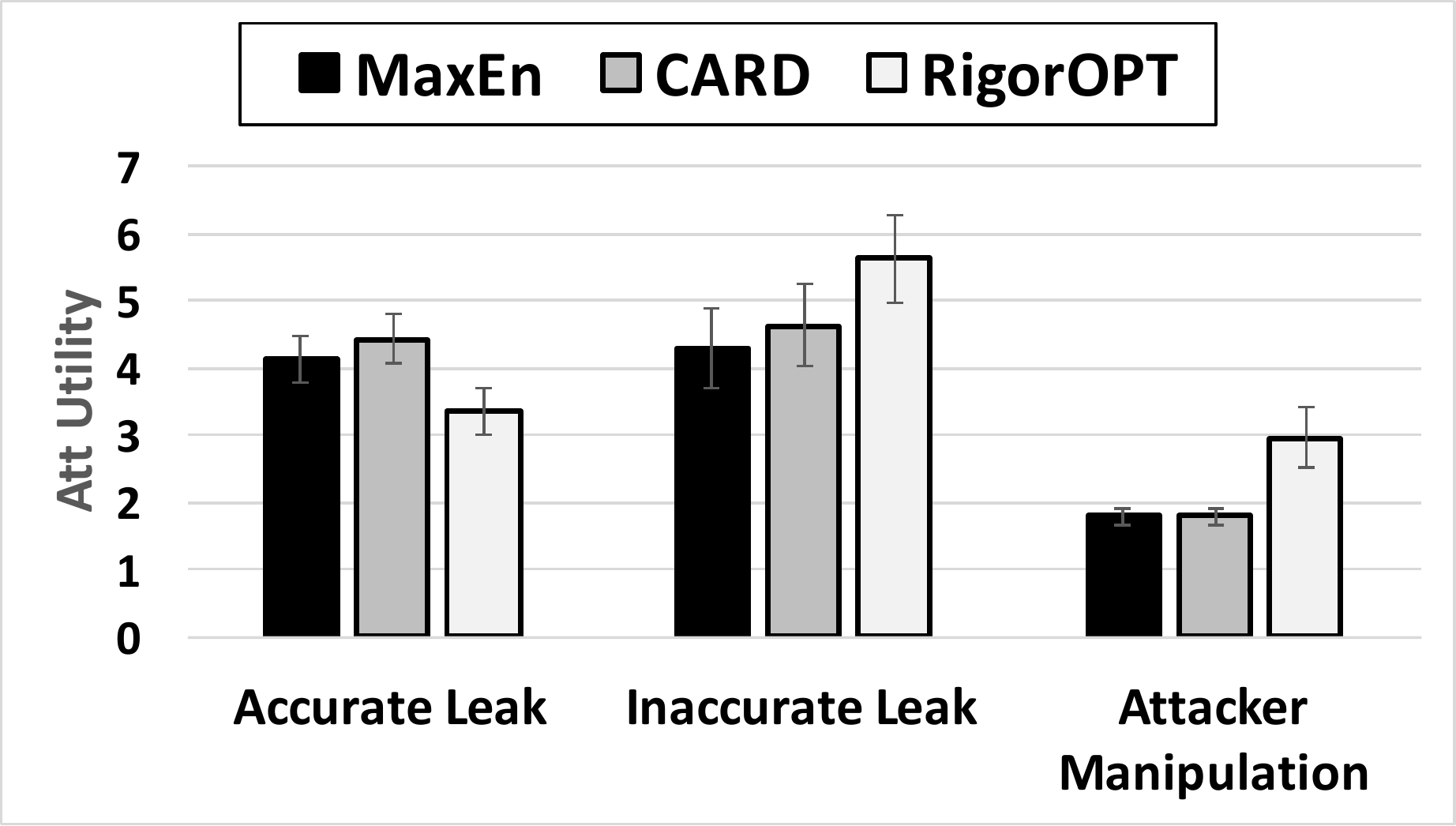}
	}
	\subfigure[Adversarial Leakage]{
		\label{exp:advL}
		\includegraphics[bb=0bp 0bp 550bp 320bp,clip,scale=0.25]{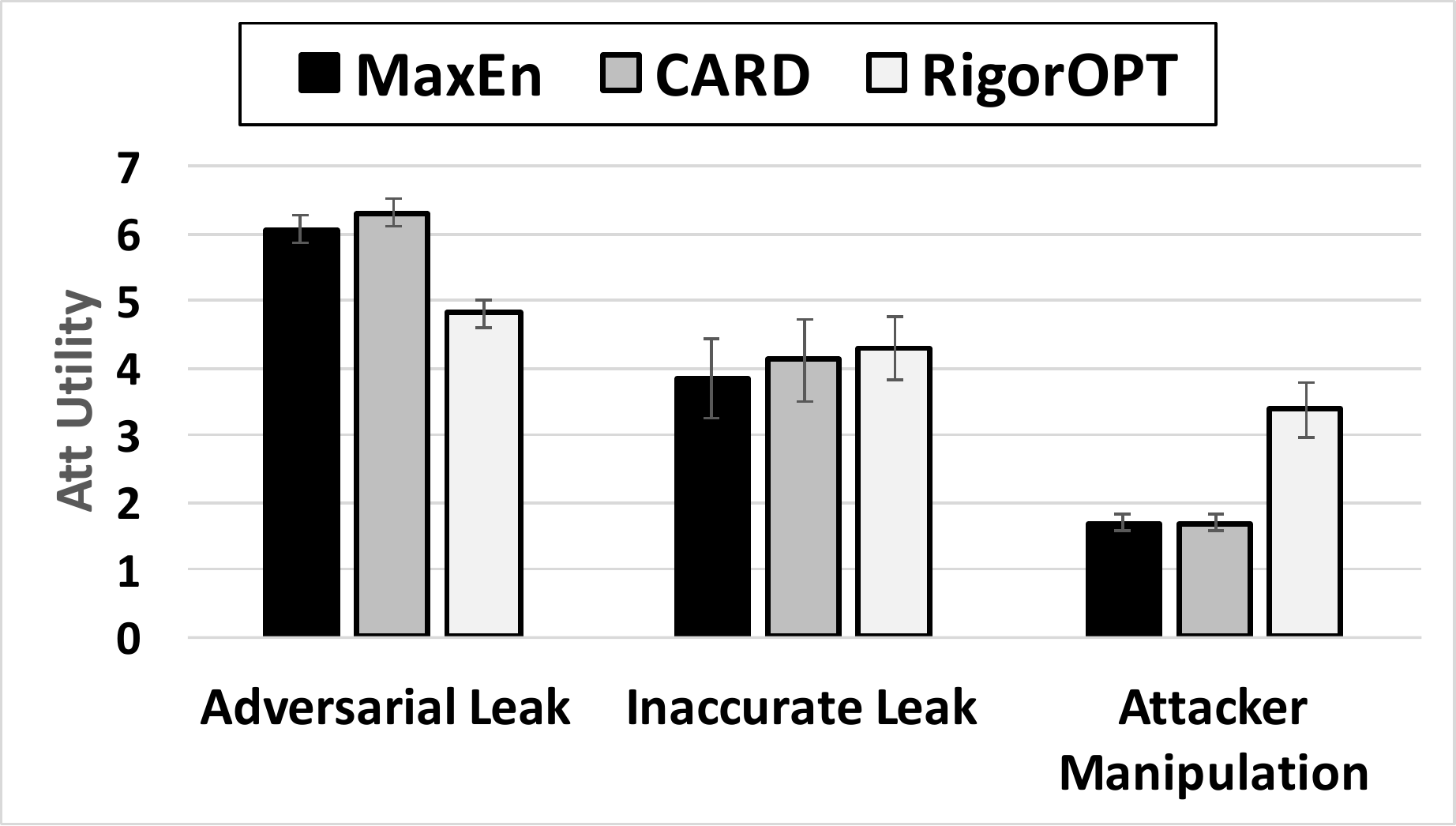}	
	}
	\vspace{-3mm}
	\caption{\small Comparisons of max-entropy algorithms and the rigorously optimal algorithm.}
	\label{fig:CompRigor}
	\vspace{-3mm}
\end{figure*}

Figure \ref{exp:probL} and \ref{exp:advL} compare the solution quality of these algorithms, with a focus on situations with model misspecification and attacker manipulations. Each column averages over 20 randomly generated \emph{zero-sum} games with  $20$ targets, $6$ patrollers and  utility drawn from $[-10,10]$ uniformly at random. 
Figure \ref{exp:probL} examines 3 scenarios. The ``\emph{Accurate  Leak}'' scenario assumes that the defender knows precisely which target leaks information; 
%In this scenario, as expected, the optimal algorithm \ri \, outperforms the other two. 
The ``\emph{Inaccurate Leak}'' scenario assumes that one target leaks information, but the defender does \emph{not} know exactly which one and only has an (inaccurate) probabilistic belief, over which she optimizes her strategy;  
%The defender adopts the optimal strategy based on his probabilistic belief of leakage, but the ultimate utility is calculated based on the actual leaking target. In this scenario, \en \, and \ca \, outperform \ri.   
The ``\emph{Attacker Manipulation}'' scenario assumes that there is \emph{no} leaking target, but the attacker misleads the defender to believe that one target leaks information. 
The $y$-axis is the attacker utility -- the lower, the better. As we can see, \ri \, only outperforms  \en \, and \ca \, in the ``\emph{Accurate  Leak}'' scenario (as expected since it is optimal), but is  outperformed otherwise. Figure \ref{exp:advL} is about similar scenarios but in the adversarial leakage model. Figure \ref{exp:probL} and \ref{exp:advL} show the robustness  of the max-entropy algorithms -- they are close to optimality when the model is indeed accurate, but are much more robust to  misspecification of leakage models and attacker manipulations than \ri.

\begin{figure*}[ht]
	\centering
	\subfigure[STST: Increase $T$]{
		\label{exp:TU}
		\includegraphics[width = 0.26\textwidth]{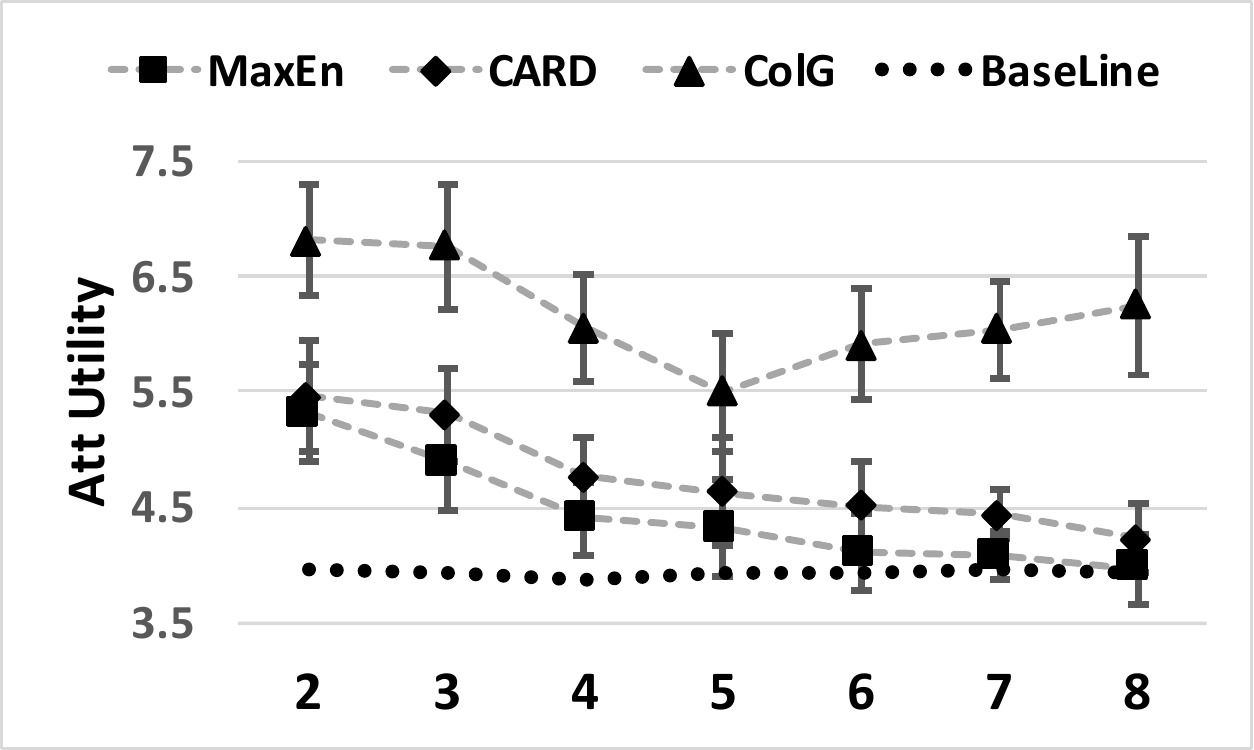}
	}
	\subfigure[STST: Increase \#MoT]{
		\label{exp:LU}
		\includegraphics[width = 0.292\textwidth]{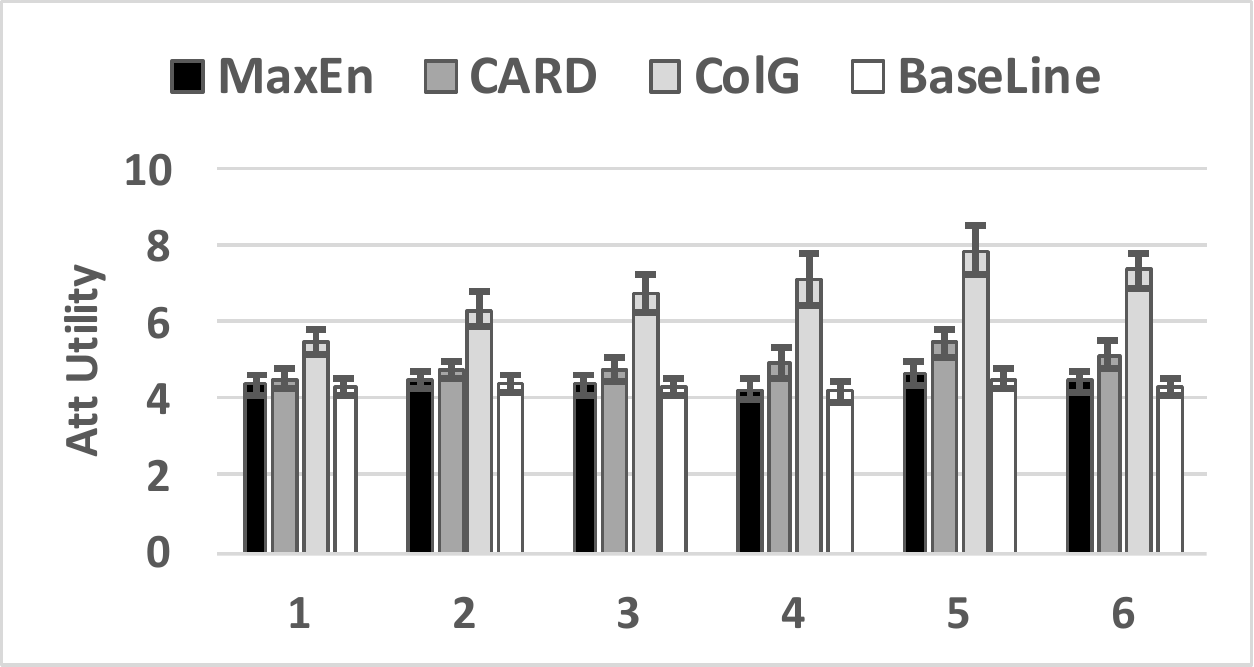}	
	}
	\subfigure[FAMS: Increase DtS]{
		\label{exp:A_DtS_defU}
		\includegraphics[width = 0.26\textwidth]{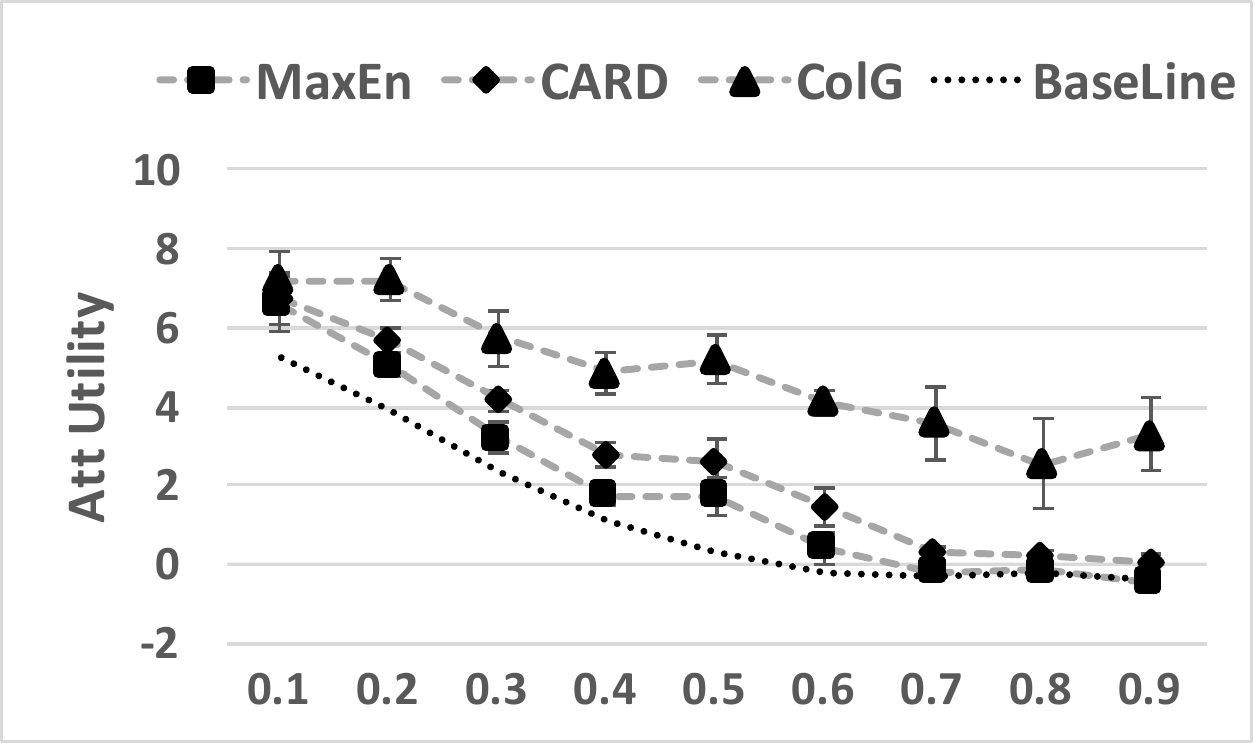}	
	}
	\vspace{-3mm}
	\caption{Comparisons of max-entropy and traditional algorithms in STST and FAMS settings.}
	\label{fig:wildlife}
	\vspace{-3mm}
\end{figure*}
\subsection{Comparisons with Traditional Algorithms in Spatio-Temporal Security Settings}\label{subsec:CompClassic}
%\vspace{-1mm}
We now experimentally compare \en,~\ca\, with traditional security game algorithms in spatio-temporal security settings. We are not aware of any previous algorithm that directly computes the optimal defender strategy against a particular leakage model, therefore the rigorously optimal solution is not available. We instead use a  ``harder" \texttt{BaseLine} which is the attacker utility assuming no leakage. This is the best (i.e., smallest) possible attacker utility.   The most widely used approach for solving large-scale security games is  the  \emph{column generation} technique (a.k.a., strategy/constraint generation \cite{Jain2010,Bosansky2015}). Particularly, we compare \en,  \ca \, with \ge~(the optimal mixed strategy computed via column generation\footnote{The  column generation technique is widely used in many security game algorithms. Though some security games use a compact linear program to directly compute the optimal marginal vector, the ultimate generation of a deployable mixed strategy still requires strategy generation techniques. In FAMS domain, \ge \,  is precisely  the ASPEN algorithm  \cite{Jain2010} -- the leading algorithm today for scheduling air marshals at scale.  }  assuming no leakage).   Note that all these algorithms use the same marginal vector. The goal is to test their robustness in presence of information leakage. %Note that without leakage, all three algorithms achieve the \emph{same} solution quality since they implement the same marginal vector.  The ability of mitigating negative effects of leakage puts additional values to  max-entropy algorithms.  

%We remark that \ge \, instantiated in different settings corresponds to different concrete  algorithms. Here we will call all these algorithms \ge, to emphasize the underlying algorithmic basics, i.e., the column generation technique, and our goal is  to examine this fundamental technique and compare its performance with our new algorithms.  For example, in the air marshal scheduling setting \ge \, is precisely the ASPEN algorithm -- the leading algorithm today for scheduling air marshals at scale \cite{Jain2010}. In other spatio-temporal security games, though some algorithms utilize compact  representation and linear programming \cite{Fang2013,Yin2015}, they usually \emph{only} output an implementable marginal vector, whose decomposition into a deployable mixed strategy still requires strategy generation. To avoid such intricacies, we directly use \ge \, in these settings.

All algorithms are tested in two settings: Spatio-Temporal Settings with Two patrollers (STST) and Federal Air Marshal Scheduling for round-trip flights (FAMS). 
%In all WPS games, the defender has two patrollers.  
All algorithms run efficiently (within $2$ minutes),  so we only compare their solution qualities in term of  attacker utilities -- the lower, the better.
%The closer to this lower bound, the better. In all figures, the smaller the value is, the better. 
Unless specifically mentioned, we always assume that the attacker can monitor two randomly chosen targets at time $t$=1 and attacks one target at the last time layer $ T$ ($T$=2 in FAMS).
All results are averaged over 20 zero-sum security games with utility drawn randomly from $[-10,10]$. We test the algorithms on zero-sum games because they are strictly competitive, therefore any information leaking to the attacker will benefit the attacker and hurt the defender. However, the effects of the curse of correlation (CoC) could be a mix of both good and bad in general-sum security games because there,  ``leaking" information to the attacker  could sometimes be  beneficial to the defender. This has been studied in previous work on strategic information revelation in security games \cite{Babinovich2015,Guo2017}. In zero-sum security games, however, it  was formally proved that any information to the attacker will hurt the defender. In this sense, zero-sum games serve as the best fit for studying harms of CoC. Previous research \cite{Alon2013} studying information leakage in normal-form games has also focused on zero-sum games. We remark that many security games, including several real-world applications in spatio-temporal domains\cite{Yin2012,Fang2013}, are indeed zero-sum.

%Figure \ref{fig:wildlife} shows results for WildProt setting; Figure \ref{fig:AirM} contains results for FAMS games.

In all our experiments,  \en \, significantly outperforms \ge; \ca \, is usually slightly worse than \en. We therefore mainly focus on analyzing other interesting findings here. Figure \ref{exp:TU} compares the algorithms in the STST setting by varying the number of time layers $T$, but fixing $N=9$.  Interestingly, when $T$ increases, \en \, and \ca \, approaches the \texttt{BaseLine}, i.e., the lowest possible attacker utility. This shows that in patrolling strategies of large entropy, the correlation between a patroller's initial and later moves gradually disappear as time goes on. 
%Therefore, even the attacker observes targets at the initial time layer infers almost no information about a relatively later time layer, e.g., $T=8$ is Figure \ref{exp:TU}. 
This illustrates the validity of max-entropy approach for  mitigating CoC. 
%, and also illustrates the advantage of the principle of max entropy.
% patrolling strategy resolves the concern raised at the beginning of this paper. That is, using the max-entropy strategy, even poachers monitor rangers' initial movements, they are not "free to enter" since rangers' patrolling becomes unpredictable after several time layers -- i.e., rangers may come back. 
In Figure \ref{exp:LU}, we fix $T = 9,N=9$, and  compare the algorithms by varying the number of Monitored Targets (\#MoT). Surprisingly, even when the attacker can monitor $6$ out of $9$ targets at $t=1$, \en \, and \ca \, are still close to \texttt{BaseLine}, while the performance of \ge \, gradually decrease as \#MoT increases.

Figure \ref{exp:A_DtS_defU} moves to the FAMS setting, and compares all algorithms by varying the deployment-to-saturation (DtS) ratio \cite{Jain2012}, which is $2k/n$ in FAMS ($n=60$ in Figure \ref{exp:A_DtS_defU}). DtS captures the \emph{fraction} of targets that can be covered in a pure strategy. It turns out that the higher the DtS ratio is, the worse \ge \, performs. 
%This is evidenced by the gap between \ge \, and \en \, in Figure \ref{exp:A_DtS_defU}. 
This is because, with higher DtS,  \ge \, quickly converges to an optimal mixed strategy with very small support, since each pure strategy covers many targets. Unfortunately, such a small-support  strategy suffers severely from the curse of correlation. We noted that in FAMS games with 100 targets, \en,  \ca,  \ge\, use 99997, 2199, 53 pure strategies on average (\en\, samples 100,000 pure strategies in our experiments, and almost all of them are different). 
% The following table lists the number of distinct pure strategies used by different algorithms 
%\begin{table}[H]
%	\begin{center}
%		\begin{tabular}{ |c| c| c| c|c|c| }
%			\hline
%			&\en  &  \ca & \ge \\  \hline  
%			n = 50 &84958 & 1039 &  25\\ \hline 
%			n = 100 &99997 & 2199 &  53\\ \hline 
%		\end{tabular}
%	\end{center}
%	\caption{\label{fig:air}Num of pure strategies used by different algorithms. }
%\end{table}
% We remark that the trend of the gap in Figure \ref{exp:A_DtS_defU} is similar to the trend of the gap in Figure \ref{exp:A_DtS_TDM} (also happens in Figure \ref{exp:TU} \ref{exp:UTDM}), this provides evidence that our notion of TDM correctly captures loss of unpredictability, and ultimately, of the defender utility. 

%\begin{figure}[H]
%	\centering	
%%	\subfigure[Attacker Utility]{
%	 \subfigure{
%		\label{exp:A_DtS_defU}
%		\includegraphics[width = 0.22\textwidth]{fig/A_DtS_defU.pdf}	
%	}
%	\subfigure{
%		\label{exp:A_DtS_TDM}
%		\includegraphics[width = 0.22\textwidth]{fig/A_DtS_TDM.pdf}	
%	}
%	\vspace{-2mm}
%	\caption{AMS Setting with increasing DtS Ratio ($x$-axis)}
%	\label{fig:AirM}
%\end{figure}

%We also consider the deployment-to-saturation \cite{Jain2012}
% Some experimental results for air marshals can be found in Figure \ref{fig:air}.

\section{Conclusions}
In this paper we formally study the curse of correlation (CoC) phenomenon in security games and show that it may cause significant loss to the defender if not addressed properly. We believe this study raises some important  caveats regarding the practical use of traditional security game algorithms. To mitigate CoC, we propose to adopt the defender strategy with maximum entropy,  elaborate the advantages of this approach and systematically justify it by empirical evaluations. Finally, we develop scalable algorithms for computing the max-entropy defender strategy in well-motivated applications.  One important open question is to develop other methods to address CoC in security games. 

%Compared to the ideal solution of accurately modeling the attacker's partial observations, our approach is more   it is  is oblivious to the attacker's partial observations thus is more robust to model misspecification and attacker's strategic manipulation  max

%systematically justify it in the experiments.  
%\newpage
\bibliographystyle{plain}
\bibliography{refer}

\end{document}